\theoremstyle{plain}
\newtheorem{theorem}{Theorem}
\newtheorem{proposition}[theorem]{Proposition}
\newtheorem{corollary}[theorem]{Corollary}
\newtheorem{lemma}[theorem]{Lemma}
\newcommand{\R}{\mathbb{R}}
\newcommand{\Z}{\mathbb{Z}}
\newcommand{\V}{\mathcal V}
\newcommand{\RP}{\mathbb{R}P}
\newcommand{\uni}{universal shortest path metric}
\newcommand{\changed}[1]{#1}
\newcommand{\blue}[1]{{\color{blue}\bf #1}}
\newcommand{\red}[1]{{\color{red}\bf #1}}
\newcommand{\magenta}[1]{{\color{magenta}\bf #1}}
\newcommand{\green}[1]{{\color{green}\bf #1}}
\newcounter{sideremark}
\newcommand{\marrow}{\stepcounter{sideremark}\marginpar{$\boldsymbol{\longleftarrow\scriptstyle\arabic{sideremark}}$}}
\newif\ifappendix
\newif\ifcomments
\newcommand{\martin}[1]{{\vskip 5pt\textsf{\magenta{*** (Martin) \marrow #1}\vskip 5pt}}
}
\newcommand{\alfredo}[1]{{\vskip 5pt\textsf{\blue{*** (Alfredo) \marrow #1}\vskip5pt}}
}
\newcommand{\arnaud}[1]{{\vskip 5pt\textsf{\red{*** (Arnaud) \marrow #1}\vskip 5pt}}
}
\newcommand{\vojta}[1]{{\vskip 5pt\textsf{\green{*** (Vojta) \marrow #1}\vskip 5pt}}
}
\newcommand{\martin}[1]{}
\newcommand{\alfredo}[1]{}
\newcommand{\arnaud}[1]{}
\newcommand{\vojta}[1]{}
\DeclareMathOperator{\dist}{dist}
\DeclareMathOperator{\OUT}{OUT}
\title{Shortest path embeddings of graphs on surfaces\thanks{The project was
    partially supported by the Czech-French collaboration project EMBEDS (CZ: 7AMB15FR003, FR: 33936TF). 
    The research of A.~H. was funded by GUDHI, geometric understanding in higher dimensions.
    The
    research of A.~dM. leading to these results has
    received funding from the People Programme (Marie Curie Actions)
    of the European Union's Seventh Framework Programme
    (FP7/2007-2013) under REA grant agreement n$^{\circ}$ [291734]. V.~K.~and M.~T. were
partially supported by the project CE-ITI (GA\v{C}R P202/12/G061).}}
\author{Alfredo Hubard\thanks{INRIA Sophia-Antipolis and Universit\'e
    Paris-Est Marne-la-Vall\'ee, France,
  \protect\url{alfredo.hubard@inria.fr}}\and Vojt\v{e}ch
  Kalu\v{z}a\thanks{Department of Applied Mathematics, Charles University,
  Malostransk\'{e} n\'{a}m\v{e}st\'{\i} 25, Prague 1, Czech Republic,
\protect\url{kaluza@kam.mff.cuni.cz}}\and Arnaud de Mesmay\thanks{CNRS,
Gipsa-Lab, 11 rue des Math\'ematiques, 38 400 Saint Martin d'H\`{e}res,
France, \protect\url{arnaud.de-mesmay@gipsa-lab.fr}}\and Martin
Tancer\thanks{Department of Applied Mathematics and Institute of Theoretical
Computer Science, Charles University, Malostransk\'{e} n\'{a}m\v{e}st\'{\i} 25,
Prague 1, Czech Republic, \protect\url{tancer@kam.mff.cuni.cz}} }
\begin{document}

\maketitle

\begin{abstract}
  The classical theorem of F\'{a}ry states that every planar graph can be
  represented by an embedding in which every edge is represented by a straight
  line segment. We consider generalizations of F\'{a}ry's theorem to 
  surfaces equipped with Riemannian metrics. In this setting, we require that
  every edge is drawn as a shortest path between its two endpoints and we
  call an embedding with this property a \emph{shortest path embedding}. The
  main question addressed in this paper is whether given a closed surface
  $S$, there exists a Riemannian metric for which every topologically
  embeddable graph admits a shortest path embedding. This question is also
  motivated by various problems regarding crossing numbers on surfaces.

  We observe that the round metrics on the sphere and the projective plane have
  this property. We provide flat metrics on the torus and the Klein bottle
  which also have this property.

  Then we show that for the unit square flat metric on the Klein bottle there
  exists a graph without shortest path embeddings. We show, moreover, that for
  large $g$, there exist graphs $G$ embeddable into the orientable surface of
  genus $g$, such that with large probability a random hyperbolic metric does
  not admit a shortest path embedding of $G$, where the probability measure is
  proportional to the Weil--Petersson volume on moduli space.

  Finally, we construct a hyperbolic metric on every orientable surface $S$ of
  genus $g$, such that every graph embeddable into $S$ can be embedded so that 
  every edge is a concatenation of at most $O(g)$ shortest paths.




 
\end{abstract}

\subparagraph*{Keywords} Embedded graphs, Shortest paths, F\'{a}ry's theorem, Hyperbolic geometry, Graph drawing

\subparagraph*{Mathematics Subject Classification} Primary 05C10 and 68R10; Secondary 53C23

\bigskip



\section{Introduction}
\subparagraph*{F\'{a}ry's theorem and joint crossing numbers.} A famous theorem of F\'{a}ry~\cite{f-oslrp-48} 
states that any simple planar graph can be
embedded so that edges are represented by straight line segments. In this article we investigate analogues of this theorem in the context of graphs embedded into surfaces. We focus on the following problem:
Given a surface $S$, is there a metric on $S$ such that every graph embeddable
into $S$ can be embedded so that edges are represented by
shortest paths?

We call such an embedding a \textit{shortest path
embedding}, and such a metric a \emph{\uni}.\footnote{We do not require that these shortest
paths are unique but as we will see later on, in the case of our positive
results, i.e., Theorem~\ref{t:shortest_paths} and~\ref{T:concat},
the uniqueness of the shortest paths can be obtained as well.).}

Before being enticed by this question, we were motivated to consider it by a number of problems
involving joint embeddings of curves or graphs on surfaces
arising from seemingly disparate settings. The literature on the subject goes back at least 15 years
with Negami's work related to diagonal flips in
triangulations~\cite{n-cngepc-01}. He conjectured that there exists a
universal constant $c$ such that for any pair of graphs $G_1$ and
$G_2$ embedded in a surface $S$, there exists a homeomorphism
$h:S\rightarrow S$ such that $h(G_1)$ and $G_2$ intersect transversely
at their edges and the number of edge crossings satisfies
$cr(h(G_1),G_2) \leq c |E(G_1)|\cdot |E(G_2)|$.

Recently, on one hand,
Matou\v{s}ek, Sedgwick, Tancer, and
U. Wagner~\cite{mstw-utsnc-13,mstw-e3sd-14}, working on decidability
of embeddability of $2$-complexes into $\mathbb{R}^3$ and on the other
hand, Geelen, Huynh, and Richter~\cite{ghr-epgm-13}, in a quest for
explicit bounds for graph minors, were faced with a similar question
and provided bounds for related problems. 
Joint crossing number type problems are dually
equivalent to problems of finding a graph with a specific pattern
within an embedded graph while bounding the multiplicity of the edges
used. This is a fundamental concern of computational topology of
surfaces where one is interested in finding objects with a fixed
topology and minimal combinatorial complexity, e.g., short canonical
systems of loops~\cite{lpvv-ccpso-01}, short pants
decompositions~\cite{cl-opdsh-07} or short octagonal
decompositions~\cite{ce-tnpcs-10}; see
also~\cite{chm-dsidts-15}.


Negami provided the upper bound $cr(h(G_1),G_2) \leq c g |E(G_1)|\cdot
|E(G_2)|$, and despite subsequent
discoveries~\cite{ab-tms-01,rs-tmlrs-05}, his conjecture is still
open.  
In a paper that refines Negami's work~\cite{rs-tmlrs-05}, Richter and
Salazar wrote \textit{``this [conjecture] seems eminently reasonable:
  why should two edges be forced to cross more than once?''}. The
connection with our work is that if two graphs are embedded
transversally by shortest path embeddings, then indeed no two edges
cross more than once, since otherwise one of them could be
shortcut. \changed{In particular, a proof that every surface admits a universal
shortest path metric would
  imply Negami's conjecture, actually even if we allowed to subdivide each edge
of the embedded graph constantly many times.}

We note that prior to our work, Schaefer~\cite[paragraph on
  Geodesic crossing numbers]{s-gcnvs-13} had considered similar
questions, mainly for drawing edges of a graph by geodesics. We
provide the details below including answers to some of Schaefer's
questions. We also note that our methods easily yield a new proof of
Negami's theorem for orientable surfaces; see
Corollary~\ref{c:negami}.






Beyond crossing numbers, the existence or non-existence of shortest path universal
metrics might be relevant in curvature free and extremal Riemannian geometry.

\subparagraph*{Related work.}Various results in graph drawing~\cite{t-hgdv-13} revolve around generalizing F\'ary's theorem to find drawings of graphs with additional constraints, for instance drawing the edges with polylines with few bends. On the other hand, only few
extensions to graphs embedded in surfaces are known. Two classical
avatars of F\'ary's theorem in the plane are of relevance to our work:
Tutte's barycentric embedding theorem~\cite{t-hdg-63} and the
Koebe-Andreev-Thurston circle packing theorem (see, for example, the
book of Stephenson~\cite{s-icptda-05}). Both have been generalized to
surfaces, providing positive answers to the following questions:


\begin{enumerate}
\item Given a surface $S$, a metric $m$, and a graph $G$ embeddable into
  $S$, can we embed the graph $G$ so that every edge is represented by
  a geodesic with respect to $m$?
\item Given a graph $G$ embeddable into $S$, does there exist a metric
  $m$ on $S$ so that $G$ embeds into $S$ with shortest paths?
\end{enumerate}



The first question was considered by \changed{Schaefer~\cite{s-gcnvs-13};} a positive answer for many metrics had been previously given by Y. Colin de
Verdi\`ere~\cite{c-crgts-91} who generalized Tutte's barycentric embedding
approach using a variational principle. The idea behind this approach is to start with a
topological embedding of the graph, replace the edges by springs, and
let the system reach an equilibrium. Y. Colin de Verdi\`ere proved that for
any metric of non-positive curvature, the edges become geodesics with
disjoint interiors when the system reaches stability; moreover, this
embedding is essentially unique within its homotopy class. However,
geodesics need not be shortest paths, and two geodesics can intersect
an arbitrarily large number of times, see Figure~\ref{F:examples}. Yet, these examples do not provide a negative
answer to the second question, or to our main question, since we could
change the embedding by a homeomorphism of the torus (thus even
preserving the combinatorial map) to obtain a shortest path embedding.

\begin{figure}
  \centering
  \def\svgwidth{.9\textwidth}
	
  \begingroup%
  \makeatletter%
  \providecommand\color[2][]{%
    \errmessage{(Inkscape) Color is used for the text in Inkscape, but the package 'color.sty' is not loaded}%
    \renewcommand\color[2][]{}%
  }%
  \providecommand\transparent[1]{%
    \errmessage{(Inkscape) Transparency is used (non-zero) for the text in Inkscape, but the package 'transparent.sty' is not loaded}%
    \renewcommand\transparent[1]{}%
  }%
  \providecommand\rotatebox[2]{#2}%
  \ifx\svgwidth\undefined%
    \setlength{\unitlength}{3540.03641131bp}%
    \ifx\svgscale\undefined%
      \relax%
    \else%
      \setlength{\unitlength}{\unitlength * \real{\svgscale}}%
    \fi%
  \else%
    \setlength{\unitlength}{\svgwidth}%
  \fi%
  \global\let\svgwidth\undefined%
  \global\let\svgscale\undefined%
  \makeatother%
  \begin{picture}(1,0.19588061)%
    \put(0,0){\includegraphics[width=\unitlength,page=1]{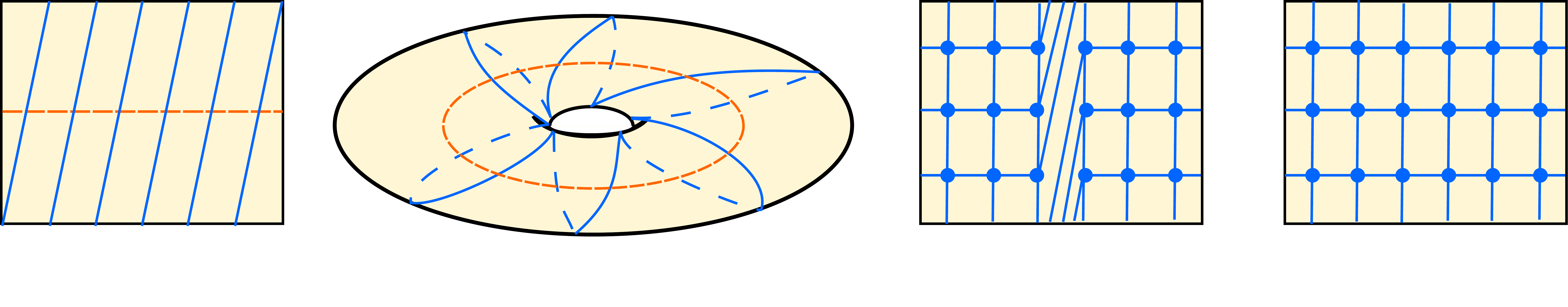}}%
    \put(0.0698879,0.00103062){\color[rgb]{0,0,0}\makebox(0,0)[lb]{\smash{a.}}}%
    \put(0.36559148,0.000128){\color[rgb]{0,0,0}\makebox(0,0)[lb]{\smash{b.}}}%
    \put(0.67357863,0.00103062){\color[rgb]{0,0,0}\makebox(0,0)[lb]{\smash{c.}}}%
    \put(0.90214773,0.000128){\color[rgb]{0,0,0}\makebox(0,0)[lb]{\smash{d.}}}%
  \end{picture}%
\endgroup%

  \caption{a. and b. Two geodesics crossing many times. c. A grid embedded in a torus with geodesics. d. A reembedding of this grid with shortest paths.}
  \label{F:examples}
\end{figure}





The second question also has a positive answer, which can be proved via a generalization of the circle packing theorem to closed surfaces~\cite{s-icptda-05}. Namely, for every triangulation $T$ of a surface, there exists a
metric of constant curvature so that $T$ can be represented as the
contact graph of a family of circles. The representation of the triangulation that places a vertex at the center of its corresponding circle is an embedding with shortest paths. Such a representation can be computed efficiently and can be used as a tool for representing graphs on surfaces~\cite{m-dghp-99}. However, the metric is determined by the triangulation, which makes this approach ill-suited for our purpose.

\subparagraph*{Our results.} Our objective here is a mix of these last two results. 
On the one hand, we require shortest paths and not geodesics, on the
other hand, we want a single metric for each surface and not one which
depends on the triangulation. We will also consider the relaxation of
our problem where we are allowed to use concatenations of shortest
paths: we say that a metric is a \textit{k-universal shortest path metric} if every topologically
embeddable graph can be represented by an embedding in which edges are
drawn as concatenations of $k$ shortest paths. This is akin to various
problems in graph drawing where graphs are embedded with polylines
with a bounded number of bends instead of straight
lines~\cite{ek-sepgfb-05,t-eggmnb-87}.

Our results focus on Riemannian metrics of constant curvature, and our techniques are organized by the sign of the curvature. We first observe that for the sphere and the projective plane, since there is a unique Riemannian metric of curvature $1$, the circle
packing approach applies to all graphs. Then, with the aid of \emph{irreducible triangulations}, we provide flat metrics (i.e., of zero curvature) on the torus and the Klein bottle for which every graph
admits a shortest path embedding.


\begin{restatable}{theorem}{tshortestpaths}\label{t:shortest_paths}
The sphere $S^2$, the projective plane $\RP^2$, the torus $T^2$, and the Klein bottle
$K$ can be endowed with a \uni.
\end{restatable}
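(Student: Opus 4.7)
The plan is to treat the constant positive curvature cases and the flat cases separately, since the techniques are quite different.

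For $S^2$ and $\RP^2$ I would use the round metric, which is the unique Riemannian metric of constant curvature $+1$ up to scaling. By the Koebe--Andreev--Thurston circle packing theorem, every triangulation of $S^2$ is realised as the tangency pattern of a packing of spherical caps; placing each vertex at the centre of its disk and drawing each edge as the short great-circle arc joining two tangent centres gives an embedding in which every edge is a shortest path, because each such arc has length less than $\pi$. On $\RP^2$ the same argument applies after lifting any triangulation of $\RP^2$ to a centrally symmetric triangulation of $S^2$, running circle packing equivariantly, and projecting the drawing back down to $\RP^2$. Since every graph embeddable in $S^2$ or $\RP^2$ is a subgraph of a triangulation, the shortest path embedding of the triangulation restricts to one of the given graph. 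Note that this step uses the nice accident that the constant curvature $+1$ metric on these surfaces is essentially unique, so the metric coming out of circle packing can be fixed once and for all.

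For $T^2$ and $K$ I would exploit the classical finiteness theorem for \emph{irreducible triangulations} of these surfaces: every triangulation is obtained from one of finitely many irreducible triangulations (the explicit list of Lawrencenko for the torus, and its analogue for the Klein bottle) by a sequence of vertex splits. The plan is then three-fold. First, fix a specific flat metric on each surface (a lattice giving $T^2$, a suitable rectangle with boundary identifications giving $K$) and verify by hand, for each of the finitely many irreducible triangulations in the list, that it admits a shortest path embedding in this metric. Second, establish a vertex split lemma: if a triangulation $T$ admits a shortest path embedding and $T'$ is obtained from $T$ by splitting a vertex $v$ into two vertices $v,v'$ across a pair of faces, then placing $v'$ in a sufficiently small Euclidean disk around $v$ produces a shortest path embedding of $T'$ in the same metric. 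Third, note that every graph embeddable in the surface is a subgraph of some triangulation, so the two previous steps together give a shortest path embedding of any embeddable graph.

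The main obstacle will be the first step, namely producing a single flat metric that simultaneously realises every irreducible triangulation with shortest paths. This is especially delicate on the Klein bottle, where, as the paper shows later, the naive unit-square flat metric already fails for certain graphs; I would therefore look for a rectangle with a carefully tuned aspect ratio whose geodesic directions are rich enough to realise all the homotopy classes demanded by the irreducible triangulations. The vertex split lemma, by contrast, is a local perturbation argument using the injectivity radius: inside a small disk around $v$ the flat metric is Euclidean, so for $v'$ close enough to $v$ the shortest path from $v'$ to a neighbour $w$ is in the same homotopy class as the concatenation of the segment $v'v$ with the old shortest path from $v$ to $w$, and this homotopy class is realised by a shortest path that does not create any new crossings.
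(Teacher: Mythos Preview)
Your proposal is correct and matches the paper's approach almost exactly: circle packing on the round metric for $S^2$ and $\RP^2$, and the finite list of irreducible triangulations plus a vertex-split (inverse edge-contraction) lemma for $T^2$ and $K$, with the Klein bottle requiring a non-square rectangle (the paper uses aspect ratio $1:(\sqrt{4/3}+\varepsilon)$). The only two places where your outline is looser than the paper are (i) the phrase ``running circle packing equivariantly'' on $\RP^2$, which hides a real argument---the paper uses uniqueness of the spherical packing up to M\"obius transformations together with the fact that every fixed-point-free M\"obius involution of $S^2$ is conjugate to the antipodal map to force the lifted packing to be antipodally symmetric---and (ii) the vertex-split lemma, where the paper explicitly carries \emph{uniqueness} of shortest paths as part of the inductive hypothesis, since that is what guarantees the local perturbation near $v$ keeps each new edge a shortest path.
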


This result could lead to the idea that shortest path embeddings can
be achieved for any metric, i.e., that every metric is a \uni. We prove
that this is not the case already for the unit square flat metric
on the Klein bottle (arguably the first example to consider).

\begin{restatable}{theorem}{tklein}\label{t:klein}
Let $K$ denote the Klein bottle endowed with the unit square flat
metric on the polygonal scheme $aba^{-1}b$. Then there exists a graph embeddable into $K$
which cannot be embedded into $K$ so that the edges are shortest paths.
\end{restatable}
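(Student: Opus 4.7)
I would begin by setting up the geometry of the universal cover. For the unit-square flat Klein bottle $K$ with polygonal scheme $aba^{-1}b$, the universal cover is $\mathbb{R}^2$ and the deck group is generated by the translation $\tilde a:(x,y)\mapsto(x,y+1)$ and the glide reflection $\tilde b:(x,y)\mapsto(x+1,1-y)$. The orbit of any point $v=(x_0,y_0)$ splits into ``unflipped'' lifts $(x_0+2k,\,y_0+m)$ and ``flipped'' lifts $(x_0+2k+1,\,1-y_0+m)$, for $k,m\in\mathbb{Z}$. In particular, every vertex has a shortest loop of length exactly $1$ in the class of $a$ (the vertical translation loop), while every ``flipped'' class has a shortest representative of length $\sqrt{1+(1-2y_0+m)^2}$ for the minimizing integer $m$, which is strictly greater than $1$ unless $y_0=1/2$. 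This asymmetry between the two generators of the deck group is central to what follows.

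The plan is to exhibit a specific graph $G$ that embeds topologically in $K$ but for which no embedding has all edges as shortest paths. A natural candidate for $G$ is a small non-planar graph such as $K_{3,3}$ or $K_5$, or a two-vertex multigraph with several parallel edges (admissible when shortest paths between the two endpoints are tied in length), chosen so that its combinatorial structure forces its edges to compete for the few short geodesics available. The argument proceeds in the universal cover: each edge lifts to a straight segment between preimages of its endpoints, and for the drawing to be a valid embedding, these segments must project to pairwise interior-disjoint curves in $K$. I would show that for a suitable $G$, the segments forced by every choice of vertex placement and every choice of homotopy class inevitably yield two projections that coincide on a subsegment or cross transversely in $K$, contradicting the embedding condition.

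The main obstacle is the need to rule out all placements of vertices and all choices of shortest representatives simultaneously. I would handle this by stratifying the configuration space of vertex placements according to the combinatorial type of the local shortest-path structure---the sorted list of shortest distances to each other vertex's orbit together with the directions realizing them---which takes only finitely many values over the compact quotient. Within each stratum, finitely many cases arise and can be checked by elementary Euclidean geometry in the cover combined with the explicit deck-group action. The most delicate strata are the axes $y=1/2$ and $x\in\{0,1/2\}$, where additional short loops appear and the competition among shortest paths is less severe; the chosen $G$ must be rich enough that even these symmetric placements lead to a contradiction. Finally, an explicit topological embedding of $G$ in $K$ verifies that the obstruction is genuinely metric rather than topological.
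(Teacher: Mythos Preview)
Your universal-cover setup is correct, but the proposal has a genuine gap: you never identify a concrete obstruction, only a framework for searching for one. The candidate graphs you mention---$K_5$, $K_{3,3}$, or a two-vertex multigraph---are almost certainly too small to fail; they have few edges and plenty of room, and indeed the paper exhibits a universal shortest-path metric on a nearby rectangular Klein bottle, so only rather special graphs can fail on the unit square. Your stratify-and-check plan is not a proof as stated: within each stratum the vertex positions still vary continuously, and you have not said what invariant forces a crossing for \emph{every} position. You also do not address the fact that a graph may have several inequivalent embeddings in $K$, each of which must be ruled out.

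The paper's argument is much more targeted. It takes a specific $9$-vertex irreducible triangulation (Kc1) and first proves that its underlying graph has a \emph{unique} embedding in $K$. In that embedding there is a $3$-cycle $abc$ separating $K$ into two M\"obius bands. Homology considerations (the cycle is trivial in $H_1(K;\mathbb{Z}_2)$, nontrivial in $H_1(K;\mathbb{Z})$, and two-sided) force any lift of $abc$ to the universal cover to have horizontal displacement at least $2$ between its endpoints. On the other hand, a short Voronoi argument shows that every unique shortest path in the square Klein bottle lifts to a segment of horizontal extent strictly less than $5/8$. Since $3\cdot\tfrac{5}{8}<2$, the three edges of $abc$ cannot all be shortest paths, regardless of where $a$, $b$, $c$ are placed. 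The missing idea in your plan is exactly this: a graph whose every embedding contains a short cycle in a homotopy class that is provably too long horizontally to be closed up by that many shortest-path edges.
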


In higher genus, the number of irreducible triangulations is too large to check
all cases by hand. Hyperbolic surfaces of large genus are hard to comprehend,
but the probabilistic \changed{point of view} allows us to show that if there
\changed{exist} universal shortest path metrics of constant curvature $-1$ at all, their fraction tends to $0$ as the genus tends to infinity.

\begin{restatable}{theorem}{Tnegative}\label{negative}
For any $\varepsilon>0$, with probability tending to $1$ as $g$ goes to infinity, a random
hyperbolic metric is not a $O(g^{1/3-\varepsilon})$-\uni. In particular, with probability tending to $1$ as $g$ goes to infinity, a random hyperbolic metric is not a universal shortest path metric.
\end{restatable}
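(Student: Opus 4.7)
The strategy is a pigeonhole-type counting argument on the moduli space of hyperbolic structures. For each $\varepsilon > 0$ and $g$ large enough, I would construct a family $\mathcal{G}_g$ of combinatorially distinct graphs embedded in the orientable genus-$g$ surface whose cardinality is too large for a typical hyperbolic metric to realize all of them as $k$-shortest-path embeddings at $k = g^{1/3 - \varepsilon}$. Classical enumeration of surface triangulations (in the style of Bender--Canfield--Gao) provides $|\mathcal{G}_g| \geq e^{\Omega(g \log g)}$, so by pigeonhole it suffices to bound the number of $k$-realizable members on a random metric by $e^{o(g \log g)}$ with probability tending to $1$.

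The key geometric input is a diameter bound for the Weil--Petersson random hyperbolic surface: estimates of Mirzakhani and Mirzakhani--Zograf give $D = O(\log g)$ with probability tending to $1$. Consequently every edge in a $k$-shortest-path embedding has length at most $kD = O(k \log g)$ and realizes one of the $e^{O(k \log g)} = g^{O(k)}$ homotopy classes of paths of bounded length between its endpoints. Counting over vertex positions, break points along each edge, and per-edge homotopy classes then produces an initial upper bound of the form $e^{O(k \cdot g \cdot \log g)}$ on the number of combinatorial embeddings realizable as $k$-shortest-path drawings.

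The main obstacle is sharpening this bound to beat $e^{\Omega(g \log g)}$ at the target exponent: a direct Milnor--Thom-style count only succeeds when $k = O(1)$, and one must combine it with features specific to the Weil--Petersson model. The relevant probabilistic ingredients are Mirzakhani--Petri Poisson statistics for short closed geodesics, concentration of the injectivity radius away from pants curves, and Mirzakhani's integration formulas on moduli space. Each lets one discard, with exponentially small probability, large classes of combinatorial embeddings incompatible with the random geometry, so that a union bound over $\mathcal{G}_g$ still shows that almost surely some member cannot be realized. Balancing the resulting probabilistic exponent against the enumeration exponent produces the threshold $k = g^{1/3}$, with the $g^{-\varepsilon}$ margin absorbing polylogarithmic factors and concentration error terms; the second statement, non-universality at $k=1$, is then an immediate corollary.
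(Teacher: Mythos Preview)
Your approach is fundamentally different from the paper's and, as you yourself concede, does not actually close. The paper uses no counting or pigeonhole over triangulations at all. Instead it exhibits a \emph{single} explicit graph $G_g$ with $O(g)$ edges, built so that every embedding of $G_g$ into $S_g$ contains a pants decomposition of a fixed topological type $\xi_g$. The key geometric input is a variant of a theorem of Guth, Parlier, and Young: with probability tending to $1$, the total length of any pants decomposition of type $\xi_g$ on a Weil--Petersson random surface is at least $g^{4/3-\varepsilon}$. By averaging over the $O(g)$ edges, some edge of $G_g$ must then have length $\Omega(g^{1/3-\varepsilon})$ in \emph{any} embedding. Combined with Mirzakhani's $O(\log g)$ diameter bound (the one ingredient you do share with the paper), that edge cannot be written as a concatenation of $O(g^{1/3-\varepsilon})$ shortest paths. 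The exponent $1/3 = 4/3 - 1$ thus falls out directly from the Guth--Parlier--Young exponent; there is no balancing of counting exponents.

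The gap in your plan is precisely the step you flag: your crude upper bound on realizable combinatorial types is $e^{O(k g \log g)}$, which for any $k$ growing with $g$ swamps the $e^{\Omega(g \log g)}$ triangulations you hope to exhaust, so pigeonhole yields nothing beyond $k = O(1)$. Your proposed remedy --- Mirzakhani--Petri Poisson statistics, injectivity-radius concentration, Mirzakhani's integration formulas --- is a list of ingredients without a recipe. You give no mechanism by which any of these would cut the realizable count by the required factor $e^{\Omega(k g \log g)}$, and no argument for why the threshold should land at $g^{1/3}$ rather than any other power. ``Balancing the resulting probabilistic exponent against the enumeration exponent'' is not a step one can evaluate, because no such probabilistic exponent has been produced. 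As written, the proposal is a plausible-sounding outline for the $k=O(1)$ case together with a hope that unspecified refinements reach $g^{1/3}$; the actual $1/3$ comes from the pants-length lower bound, which your plan never invokes.
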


Here the probability measure on the space hyperbolic surfaces is proportional to the Weil--Petersson volume, see Section~\ref{S:pants}. Our proof is an application of deep
results on this volume by Mirzakhani~\cite{m-gwpvrh-13} and Guth, Parlier, and Young~\cite{gpy-pdrs-11}.

For a given graph $G$ and a metric $m$ on $S$, Schaefer~\cite{s-gcnvs-13}
defines the \emph{geodesic crossing number} of $G$ as the minimal number of
crossings of any drawing of $G$ in $S$ in which edges are represented by
geodesics. Schaefer asks if this definition is equivalent to the analogous
definition with shortest paths instead of geodesics. Notice that the examples
in Theorems 2 and 3 have nonpositive curvature, hence, combined with the
aforementioned result of Y. Colin de Verdi\`ere imply that some graphs have geodesic crossing number zero but \emph{shortest path crossing number} nonzero, answering Schaefer's question. 

For genus $g>1$ we do not know if there exist shortest path universal
metrics. But relaxing the question to concatenations of shortest paths
and combining ideas from hyperbolic geometry and computational
topology, we provide for every orientable surface of genus $g$ an
$O(g)$-universal shortest paths metric. The proof relies on the
octagonal decompositions of \'E. Colin de Verdi\`ere and
Erickson~\cite{ce-tnpcs-10} and a variant of the aforementioned
theorem of Y. Colin de Verdi\`ere~\cite{c-crgts-91}.


\begin{restatable}{theorem}{Tconcat}\label{T:concat}
For every $g>1$, there exists an $O(g)$-universal shortest path
hyperbolic metric $m$ on the orientable surface $S$ of genus $g$.
\end{restatable}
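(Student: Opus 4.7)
The plan is to combine the octagonal decompositions of \'E.~Colin de Verdi\`ere and Erickson with Y.~Colin de Verdi\`ere's variational principle for non-positively curved metrics. Equip $S$ with a hyperbolic metric $m$ admitting an octagonal decomposition $\mathcal{D}$ into $2g-2$ regular right-angled hyperbolic octagons whose sides are closed geodesics. Each octagon, when lifted to the universal cover $\mathbb{H}^2$, becomes a geodesically convex polygon, so that any geodesic arc joining two boundary points of a single octagon is in fact a shortest path of $(S,m)$. This convexity is the geometric feature that lets us convert ``geodesic inside an octagon'' into ``shortest path of the surface''.

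Given a graph $G$ embeddable in $S$, I would produce a drawing in two stages. First, I would choose a topological embedding $\phi\colon G\hookrightarrow S$ in general position with respect to $\mathcal{D}$ such that each edge crosses $\mathcal{D}$ at most $O(g)$ times. A natural way to obtain such a $\phi$ is to start from an arbitrary topological embedding of $G$ into $S$ and route each edge through the octagon-adjacency dual graph: since $\mathcal{D}$ has only $O(g)$ components and the dual graph has $O(g)$ vertices, every simple arc can be realized so that it visits each octagon at most $O(1)$ times, yielding the desired per-edge bound. Second, superimpose $\phi(G)$ and $\mathcal{D}$ to obtain a refined embedded graph $G^{*}$ whose vertices include the intersection points, and observe that each edge of $G^{*}$ lies entirely within one octagon.

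Next, I would apply the variant of Y.~Colin de Verdi\`ere's variational theorem inside each octagonal disk, treating the newly introduced intersection vertices on $\partial O_i$ as prescribed (Dirichlet-type) boundary data. For each octagon $O_i$, this yields a geodesic embedding of the restricted portion of $G^{*}$ in which edges have pairwise disjoint interiors. By the convexity of $O_i$ when lifted to $\mathbb{H}^2$, every such geodesic arc coincides with a shortest path of $(S,m)$ between its endpoints. Reassembling the pieces across octagons, each original edge of $G$ is realized as a concatenation of at most $O(g)$ shortest paths, as required, and the overall drawing remains an embedding because the restriction to each octagon is an embedding and the straightening preserves endpoints on the octagon boundaries.

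The main obstacle, as I see it, is the first stage: simultaneously producing a topological embedding of $G$ whose edges all meet $\mathcal{D}$ transversely and at most $O(g)$ times each, while keeping edges pairwise disjoint. A naive edge-by-edge bigon-removal argument for a single edge can create crossings with other edges of $G$, so one needs a global tightening procedure. A promising alternative is to start directly from Y.~Colin de Verdi\`ere's equilibrium embedding of $G$ on all of $(S,m)$, where edges are already geodesics with disjoint interiors, and then bound topologically the number of intersections of each such geodesic with $\mathcal{D}$ by exploiting the convex-hull structure of octagons in $\mathbb{H}^2$ and the combinatorics of the octagonal decomposition. Verifying this bound and compatibility with the per-octagon straightening is where I expect the bulk of the technical work to lie.
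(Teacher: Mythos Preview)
Your overall strategy matches the paper's, but two steps are not actually carried out.

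The first is the $O(g)$ crossing bound, which you correctly flag as the crux and leave open. Routing edges one at a time through the dual graph destroys the embedding, as you note; and your fallback---embed $G$ globally by geodesics and then bound intersections with $\mathcal{D}$---cannot work as stated, since a geodesic arc on a hyperbolic surface can cross a fixed closed geodesic arbitrarily many times, with no bound depending only on $g$. The paper gets the bound from the actual content of the \'E.~Colin de Verdi\`ere--Erickson theorem: for any embedded $G$ there exists an octagonal decomposition whose curves each meet every edge of $G$ only $O(1)$ times. Since any two octagonal (or hexagonal) decompositions are related by a homeomorphism of $S_g$, this yields, for the fixed $\mathcal{D}$ defining $m$, a re-embedding of $G$ with $O(g)$ crossings per edge. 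You name that paper in your first sentence but never invoke this theorem; it is exactly the missing ingredient.

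The second gap is the convexity claim. That a right-angled polygon lifts to a convex region of $\mathbb{H}^2$ only shows that the geodesic inside the polygon is shortest within its relative homotopy class; it does not rule out a globally shorter path in $S$ that exits the polygon and returns. The paper proves the needed statement (after adding one curve to pass from octagons to right-angled hexagons) by an exchange argument using two explicit isometric involutions of the tiling, which fold any escaping shortest path back into the starting hexagon. This relies on the specific combinatorics of the decomposition and is not a formal consequence of convexity in the universal cover.
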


In this article we focused on Riemannian metrics of constant
curvature, but we remark that both of our last results also hold in some setting of piecewise-Euclidean metrics as
well. For the upper bound, it suffices to replace hyperbolic hexagons
with Euclidean ones, and the rest of the proof works similarly. 
The lower bound can be derived following the heuristic strong parallels between the
Weil--Petersson volume form on moduli space and the counting measure on
the space of $N=4g$ Euclidean triangles randomly glued together. In
particular the results that we use have analogs in this latter
space: see Brooks and Makover~\cite{bm-rcrs-04} and the second half of
the article of Guth, Parlier, and Young~\cite{gpy-pdrs-11}.



We have stated our results for graphs in this introduction. We note
that one could consider the problem of shortest path embeddings for a
graph with a fixed embedding up to a homeomorphism of the surface
(i.e., for a combinatorial map), which is more in the spirit of
Negami's conjecture. Our positive results can be stated in this
stronger version; i.e., in our proofs the map is preserved. Our
negative results would be weaker if the map had to be preserved, and
in fact the proofs deal firstly with the statements for maps and then
we derive the analog for graphs with some extra work. 

\subparagraph*{Open questions.} The main open question is the existence of universal shortest path
metrics, or $O(1)$-universal shortest path metrics. Natural candidates
for these are given by certain celebrated extremal metrics like the ones occurring as lower bounds for Gromov's
systolic inequality \cite{bs-pmrsl-94,gro-83}.

Lists of irreducible triangulations exist for the double torus and the non-orientable surface of genus up to four~\cite{s-gits-06}. While the numbers are too big to be investigated by hand as we did for the torus and the Klein bottle, it may be possible to investigate some computerized approach to test their shortest path embeddability for some well chosen hyperbolic metric.

Our Theorem~\ref{T:concat} only deals with orientable surfaces. A similar approach might work for non-orientable surfaces as well, the key issue being to generalize the octagonal decompositions of \'E. Colin de Verdi\`ere and
Erickson~\cite{ce-tnpcs-10} to the non-orientable setting. We leave this as an open problem.

\subparagraph*{Outline.} After introducing the main definitions in Section~\ref{S:prelim}, we will prove Theorems~\ref{t:shortest_paths}, \ref{t:klein}, \ref{negative}, and~\ref{T:concat} in Sections~\ref{S:shortest-paths}, \ref{S:klein}, \ref{S:pants}, and~\ref{S:concat}, respectively.

\section{Preliminaries}\label{S:prelim}

In this article we only deal with compact surfaces without
boundaries. By the classification theorem, these are characterized by
their \textit{orientability} and their \textit{genus}, generally
denoted by $g$. Orientable surfaces of genus $0$ and $1$ are
respectively the \textit{sphere} $S^2$ and the \textit{torus} $T^2$,
while non-orientable surfaces of genus $1$ and $2$ are the
\textit{projective plane} $\mathbb{R}P^2$ and the \textit{Klein
  bottle} $K$. The orientable surface of genus $g$ is denoted by $S_g$. The \textit{Euler genus} is equal to the genus for non-orientable surfaces and equals twice the genus for orientable surfaces.

By a \textit{path} on a surface $S$ we mean a continuous map $p: [0,1]
\rightarrow S$, and a \textit{closed curve} denotes a continuous map
$\gamma: S^1 \rightarrow S$. These are \textit{simple} if they are
injective.  We will be using occasionally the notions of
\textit{homotopy}, \textit{homology}, and \textit{universal cover}, we
refer to Hatcher~\cite{h-at-02} for an introduction to these
concepts.
All the graphs that we consider in this paper are \textbf{simple}
graphs unless specified otherwise, i.e., loops and multiple edges are
disallowed. An \textit{embedding} of a graph $G$ into a surface $S$
is, informally, a crossing-free drawing of $G$ on $S$. We refer to
Mohar and Thomassen~\cite{mt-gs-01} for a thorough reference on graphs
on surfaces, and only recall the main definitions. A graph embedding
is \textit{cellular} if its faces are homeomorphic to open
disks. Euler's formula states that $v-e+f=2-g$ for any graph with $v$
vertices, $e$ edges, and $f$ faces cellularly embedded in a surface
$S$ of Euler genus $g$. When the graph is not cellularly embedded,
this becomes an inequality: $v-e+f \geq 2-g$. A \textit{triangulation}
of a surface is a cellular graph embedding such that all the faces are adjacent to three edges. 
\changed{An \textit{isomorphism} between two triangulations is a bijection between the vertices, edges and faces that respects incidences}. By a slight abuse of language, we will
sometimes refer to an embedding of a triangulation, by which we mean
an embedding of its underlying graph which is homeomorphic to the
given triangulation.
A \textit{pants decomposition} of an \changed{orientable} surface
$S$ is a family of disjoint curves $\Gamma$ such that cutting $S$
along all of the curves of $\Gamma$ gives a disjoint union of pairs of
pants, i.e., spheres with three boundaries. Every orientable surface except the
sphere and the torus admits a pants decomposition
with $3g-3$ closed curves and $2g-2$ pairs of pants.
Note that all the pants
decompositions are not topologically the same, i.e., are not related by a
self-homeomorphism of the surface. A class of pants decompositions
equivalent under such homeomorphisms will be called the (topological)
\textit{type} of the pants decomposition. We say that an embedding
$f:G \rightarrow S$ \textit{contains a pants decomposition} if there
exists a subgraph $H \subseteq G$ such that $f:H \rightarrow S$ is a
pants decomposition of $S$.

In this article, we will also be dealing with notions coming from
Riemannian geometry, we refer
to the book of do Carmo for more background~\cite{c-rg-92}. By a
\textit{metric} we always mean a Riemannian metric, which associates
to every point of a surface the \textit{curvature} at this point. The
Gauss--Bonnet theorem ties geometry and topology; it implies that the
sign of a metric of constant curvature that a topological surface
accepts is determined solely by its Euler genus.  

 A Riemannian metric induces a length functional on paths and
closed curves. A path or a closed curve is a
\textit{geodesic} if the functional is locally minimal. Shortest paths between
two points are global minima of the length functional. Unlike in the plane, geodesics are not,
in general, shortest paths; in addition, neither geodesics nor shortest paths are unique in general. 
If we have a shortest path embedding of a graph where every edge is drawn as
the
unique shortest path between its endpoints, we speak of \textit{shortest
paths embedding with uniqueness}.


\section{Shortest path embeddings for low genus surfaces}\label{S:shortest-paths}



%
%

\tshortestpaths*

In the theorem above, for $S^2$ and $\RP^2$ we use the round metric of
positive constant curvature scaled to $1$. In the case of torus we use
the flat metric obtained by the identification of the opposite edges
of the square. In the case of the Klein bottle we can show that an
analogous result fails with the flat square metric on the polygonal
scheme $aba^{-1}b$, as we will see in Section~\ref{S:klein}. But we
can get the result for the metric obtained by the identification of
the edges of a rectangle of dimensions $1 \times b$ where $b =
\sqrt{4/3} + \varepsilon$ for some small $\varepsilon > 0$. (The edges
of length $1$ are identified coherently, whereas the edges of length
$b$ are identified in opposite directions.)

In all cases we can get shortest path embeddings with uniqueness. Actually, for
the torus and the Klein bottle, uniqueness will be a convenient assumption for
inductive proofs.




\subparagraph*{The sphere and the projective plane.} 



By the circle packing theorem any planar graph can be represented as the
contact graph of a circle packing on the sphere (endowed with the standard
round metric)~\cite[Theorem~4.3]{s-icptda-05}. \changed{On the sphere each circle is the
boundary of a cap (a metric ball), and by the center of the circle we mean the
center of the corresponding cap.} It is easy to see that drawing each edge $(u,v)$
of a contact graph, by the shortest path between the centers the circles
corresponding to $u$ and $v$ is an embedding. Since these are shortest paths,
this proves Theorem~\ref{t:shortest_paths} for $S^2$. 

\bigskip

For the projective plane, a similar circle packing theorem follows
from the spherical case. Since we could not find a reference in
the literature we include a proof here. \\

\changed{Henceforth, the sphere and the projective plane are always endowed with their usual spherical metrics (of constant curvature). For any circle packing $P \subset S^2$, consider its contact graph, which we denote by $C(P)$, together with the embedding of $C(P)$ to $S^2$ in which the edge corresponding to touching circles $C_v$ and $C_w$ is drawn by the geodesic between the centers of the circles $C_v$ and $C_u$ in $S^2$. We will only consider packings $P$ for which this embedding of $C(P)$ is a triangulation of $S^2$, and we call the corresponding triangulation the \emph{carrier} of $P$. A map $S^2 \rightarrow S^2$ mapping a circle packing $P$ to itself induces an automorphism of its carrier (as defined in the preliminaries), which by a slight abuse of language we call an automorphism of the circle packing $P$.}

\begin{proposition}
  \changed{Every triangulation of $\R P^2$ is the carrier of a circle packing in $\R P^2$.}
\end{proposition}

\changed{In particular, for any triangulation $T$ of $\R P^2$, this provides a shortest path embedding of $T$. Since any simple graph embedded on $\R P^2$ can be extended to a triangulation, this proves Theorem~\ref{t:shortest_paths} for $\R P^2$.}

\begin{proof}
  \changed{
Let $T$ be a triangulation of $\R P^2$. Let $\pi\colon S^2 \to \R P^2$ be the projection map sending each pair of antipodal points in $S^2$ to a point in $\R P^2$. Let $\hat{T}$ be the double cover of $T$, which is a triangulation of $S^2$ induced by $\pi^{-1}(T)$, and let $i$ be the automorphism $i \colon \hat T\to \hat T$ induced by the antipodal map. By the Koebe--Andreev--Thurston theorem there exists a circle packing $\hat P \subset S^2$ whose carrier is isomorphic to $\hat T$. Furthermore, this circle packing is unique up to M\"{o}bius transformations~\cite[Chapter~13]{t-gt3m-79}, in particular, any automorphism of $\hat{P}$ is induced by a M\"{o}bius transformation of $S^2$. Thus the map $i$ is induced by a M\"{o}bius transformation $\phi\colon S^2 \to S^2$. Furthermore, since $i$ is fixed-point free, so is $\phi$: if $\phi(x)=x$ for $x$ within a circle corresponding to a vertex $v$, then $i(v)=v$ which is a contradiction, and similarly, a fixed point on the intersection of two circles, or in the region between three adjacent circles would correspond to a fixed edge or face for $i$, which is also a contradiction. Now, any fixed point free M\"{o}bius transformation of the sphere is the antipodal map upto a M\"{o}bius transformation~\cite{wil-81}. Specifically, there exists another M\"{o}bius transformation $\tau\colon S^2 \to S^2$ such that $\tau^{-1} \circ \phi \circ \tau$ is the antipodal map. We can conclude that the circle packing $\hat Q =
\tau^{-1}(\hat P)$ is centrally symmetric and therefore it projects to a circle
packing $Q = \pi(\hat Q)$ in $\RP^2$. Since the carrier of $\hat Q$ is
isomorphic to $\hat T$, the carrier of $Q$ is isomorphic to $T$.}
\end{proof}

\subparagraph*{Minimal triangulations.} Let $S$ be a surface and $T$ be a triangulation of it. The triangulation $T$ is
called \emph{reducible}, if it contains an edge $e$ such that the
contraction of $e$ yields again a triangulation, which we denote by
$T/e$. We refer to $e$ as a \emph{contractible} edge (we do not mean
contractibility in a~topological sense). On the other hand, a
triangulation is \emph{minimal} (or \emph{irreducible}), if no edge
can be contracted this way. For every surface there is a finite list
of minimal triangulations. In particular, for the torus $T^2$ this
list consists of $21$ triangulations found by
Lawrencenko~\cite{l-itt-87} and for the Klein bottle $K$ there are
$29$ minimal triangulations found by Sulanke~\cite{s-nitkb-06}.

The strategy of the proof of Theorem~\ref{t:shortest_paths} for $T^2$ and $K$ is to show that it
is sufficient to check Theorem~\ref{t:shortest_paths} for minimal
triangulations with appropriate fixed metric; see Lemma~\ref{l:contraction}.
Then, since every embedded graph can be extended to a triangulation (possibly
with adding new vertices), we finish
the proof by providing the list of shortest path embeddings of the minimal
triangulations.

\begin{lemma}
\label{l:contraction}
Let $S$ be a surface equipped with a flat metric. Let $T$ be a reducible
triangulation with contractible edge $e$. Let us assume that $T/e$ admits a
shortest path embedding with uniqueness into $S$. Then $T$ admits a shortest path
embedding with uniqueness into $S$ as well.
\end{lemma}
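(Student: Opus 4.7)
My plan is to uncontract $w$ in the given embedding $f : T/e \to S$. Denote $p := f(w)$, let $a, b$ be the two common neighbors of $u$ and $v$ in $T$ (the apexes of the two triangles of $T$ containing $e$), and let $u_1, \ldots, u_j$, $v_1, \ldots, v_k$ be the remaining neighbors. In the given embedding, the shortest paths from $p$ to its neighbors in $T/e$ enter $p$ at some tangent angles $\theta_a, \theta_{u_1}, \ldots, \theta_{u_j}, \theta_b, \theta_{v_k}, \ldots, \theta_{v_1}$ in cyclic order around $p$, reflecting the link of $w$ in $T/e$; the pair $\theta_a, \theta_b$ splits the angular circle at $p$ into the \emph{$u$-arc} (through the $\theta_{u_i}$'s) and the \emph{$v$-arc} (through the $\theta_{v_i}$'s).

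Since the metric is flat, a sufficiently small closed disk $D$ around $p$ is isometric to a Euclidean disk, and I shrink $D$ so that it contains no vertex or edge of $f(T/e)$ other than the initial segments of the edges through $w$. I then choose a unit direction $\vec n$ with angle $\beta$ such that $\beta$ lies in the open $u$-arc while $\beta + \pi$ lies in the open $v$-arc, and neither $\beta$ nor $\beta + \pi$ coincides with any $\theta_x$. A brief case analysis on the angular extent of the $u$-arc shows that the set of admissible $\beta$ is a nonempty open arc of length $\min(|u\text{-arc}|, |v\text{-arc}|)$, and avoiding the finitely many forbidden $\theta_x$'s still leaves an open set. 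For a small $\varepsilon > 0$ to be fixed at the end, I place $u := p + \varepsilon \vec n$ and $v := p - \varepsilon \vec n$ inside $D$, keep $f$ on all other vertices, and define $f' : T \to S$ by drawing each edge of $T$ as the unique shortest path in $S$ between its endpoints.

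For $\varepsilon$ small, three properties must be verified. First, \emph{uniqueness}: the set of endpoint pairs in $S \times S$ admitting a unique shortest path is open (its complement is supported on cut loci), so all shortest paths defining $f'$ are unique; in particular $uv$ is the straight segment of length $2\varepsilon$ inside $D$. Second, the \emph{cyclic order} at $u$ reproduces the link of $u$ in $T$: the tangent direction at $u$ of the shortest path to any $x \neq v$ deviates from $\theta_x$ by $O(\varepsilon)$, while the edge to $v$ leaves $u$ in direction $-\vec n$, which lies in the $v$-arc by construction; symmetrically at $v$. Third, the \emph{non-crossing} property: edges of $f'$ disjoint from $\{u, v\}$ are unchanged from $f$, and edges touching exactly one of $u, v$ are $O(\varepsilon)$-perturbations of edges of $f$ incident to $w$, introducing no new crossings by continuity.

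The main obstacle is the non-crossing of the two ``double'' pairs $\{ua, va\}$ and $\{ub, vb\}$, each of which collapses to a single edge of $f$ as $\varepsilon \to 0$. The flatness of the metric is essential here: in the Euclidean universal cover $\widetilde S$, the two edges of such a pair lift to distinct straight segments sharing only the common lift $\tilde a$ (respectively $\tilde b$), so in $\widetilde S$ they meet only at that lift. Two points on the resulting thin triangle $\tilde u \tilde v \tilde a$ project to the same point in $S$ only if their difference is a nonzero element of the deck group $\Gamma$. Because the original segment $\tilde p \tilde a$ projects injectively, the set $\{t(\tilde a - \tilde p) : t \in [-1,1]\}$ is at a positive distance from $\Gamma \setminus \{0\}$ (a finite check, since only finitely many deck elements bring the segment near itself), and so for $\varepsilon$ below that distance divided by a uniform constant, the Minkowski difference of $\tilde u \tilde v \tilde a$ with itself avoids $\Gamma \setminus \{0\}$. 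The same argument applies to $\{ub, vb\}$. Fixing $\varepsilon$ smaller than every threshold encountered yields the desired shortest path embedding of $T$ with uniqueness.
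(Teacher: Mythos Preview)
Your approach is the same as the paper's---perform the vertex split in a small Euclidean neighbourhood of the contracted vertex $w$---and you supply considerably more detail than the paper, which simply appeals to a picture. Your choice of the direction $\vec n$ (with angle $\beta$ in the open $u$-arc and $\beta+\pi$ in the open $v$-arc) is exactly the right one; in fact it is equivalent to the condition that the two new triangles $uva$ and $uvb$ receive the correct orientation, and together with the easy observation that all the other triangles of the star keep their orientation under an $O(\varepsilon)$-perturbation of one vertex, this gives a clean one-line proof that the split star is embedded (a piecewise-linear map from a disk with all triangles positively oriented and fixed boundary is a homeomorphism).

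There is, however, a genuine gap in your non-crossing verification as written. You treat (i) unchanged edges, (ii) a perturbed edge against an unchanged edge ``by continuity'', and (iii) the two ``double pairs'' $\{ua,va\}$ and $\{ub,vb\}$. But you omit the case of a $u$-edge against a $v$-edge to \emph{different} neighbours, say $uu_i$ against $vv_j$. Your continuity appeal does not cover this: as $\varepsilon\to 0$ these converge to $wu_i$ and $wv_j$, which \emph{do} meet (at $w$), and after perturbation the two segments share no endpoint, so transversality alone cannot decide whether the single limiting intersection survives or disappears. It is precisely your hypothesis on $\beta$ that forces it to disappear---a short computation shows that the intersection of the two supporting lines always has a negative parameter on at least one of the two segments---but you never invoke that hypothesis here. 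The orientation argument in the previous paragraph is the cleanest patch.

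One minor point: in your double-pair argument you write ``their difference is a nonzero element of the deck group $\Gamma$'' and then reason with Minkowski differences. This is correct for the torus, but the deck group of the flat Klein bottle contains glide reflections and does not act by translations, so ``difference'' is not well-defined. The repair is cosmetic (replace differences by $\Gamma$-orbits and use the same compactness), but since the lemma is applied to the Klein bottle it is worth stating correctly.
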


The restriction on flat metrics in the lemma above does not seem essential, 
but this is all we need and this way the proof is quite simple.

\bigskip

\begin{proof}

\changed{
Let $v$ be the vertex of $T/e$ obtained by the contraction of $e$. We first consider the shortest path drawing of $T/e$. Then we
perform the appropriate vertex splitting of $v$ (the inverse operation
of the contraction) in a close neighborhood of $v$ so that we get a
shortest path embedding of $T$. In order to see that this is indeed
possible, let us consider the subgraph $G$ formed by the edges incident to
$v$.
It is a simply connected set, which lifts isometrically to the universal
cover so that the edges are realized by straight segments (since
they are shortest paths). Thus we may choose $\varepsilon>0$ small enough such that
the $\varepsilon$-neighborhood $N_G^{\varepsilon}$ of $G$ is simply connected. Moreover, by compactness, for each edge $uv$ of
$T/e$, there exists $\varepsilon' \leq \varepsilon$
such that for every $v'$ in the $\varepsilon'$-neighborhood of $v$, the
geodesic segment connecting $u$ and $v'$ inside
$N_G^{\varepsilon'}$ is the unique shortest path between $u$
and $v'$ in $S$; see Figure~\ref{f:vertex_splitting} and footnote}\footnote{\changed{Indeed, let us consider two
    functions $d, d_{\OUT} \colon S \to \R$. We set $d(x)
    := \dist(u,x)$ and $d_{OUT}(x) = \min\{\dist(u,y) +
    \dist(y,x)\colon y \in S \setminus N_G^\varepsilon\}$. The
    function $d_{OUT}$ is well defined as the function
    $g(y) := \dist(u,y) + \dist(y,x)$ is continuous and attains its minimum on the
    compact set $S \setminus N_G^\varepsilon$. By the triangle
    inequality $|d_{OUT}(x) - d_{OUT}(x')| \leq
    \dist(x,x')$ for $x, x' \in S$ which implies that $d_{OUT}$ is
    continuous. Finally, we observe that $d(v) < d_{\OUT}(v)$
    as the shortest path connecting $u$ and $v$ is unique. Therefore
    there is an open $\varepsilon'$-neighborhood $N_v$ of $v$ inside
    $N_G^\varepsilon$ such that $d(v') < d_{\OUT}(v')$ for any $v'$ in
    $N_v$. This is the required $\varepsilon'$ needed for the edge $uv$.}}. 
    Therefore, it is sufficient to perform the vertex splitting of $v$ in a sufficiently small
  neighborhood of $v$ so that we do not introduce new intersections.  \end{proof}

\begin{figure}
\begin{center}
  \includegraphics{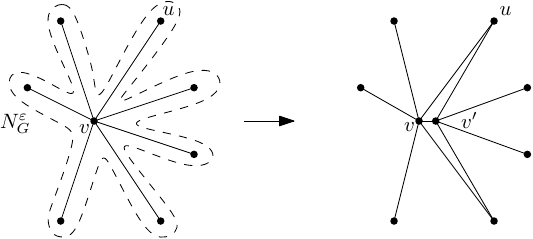}
  \caption{Splitting a vertex in the proof of Lemma~\ref{l:contraction}.}
  \label{f:vertex_splitting}
\end{center}
\end{figure}

%






\subparagraph*{The minimal triangulations of $T^2$ and $K$.} In Figure~\ref{f:torus} we provide a list of shortest path embeddings
with uniqueness of minimal triangulations of the torus with a flat
metric obtained by identifying the opposite edges of the unit
square. They are in the same order as in the book of Mohar and
Thomassen~\cite[Figure~5.3]{mt-gs-01}. The black (thin) edges are the
edges of the triangulation whereas the green (thick) edges are the
identified boundaries of the unit square which are not parts of the edges of the
triangulations. We just skip drawings of the triangulations 7 to 17,
because they are all analogous to the triangulation 6, they only have
different patterns of diagonals. It is clear that every edge is a
geodesic.  In order to check that each of them is drawn as a shortest
path, it is sufficient to verify that each edge projects vertically
and horizontally to a segment of length less than $\frac12$.


\bigskip

For the Klein bottle $K$, we also provide a metric such that all the minimal
triangulations admit shortest path embeddings with uniqueness.
We obtain this metric as the identification of the edges of the
rectangle $R = [0,a]\times[0,b]$, where $a = 1$ and $b = \sqrt{4/3} +
\varepsilon$ for sufficiently small $\varepsilon$. The edges of length
$1$ are identified in coherent directions. The edges of length $b$ are
identified in the opposite directions. The value $b = \sqrt{4/3} +
\varepsilon$ is set up in such a way that if we consider the points $p
= (0, \frac34b) = (1,\frac b4)$ and $q = (\frac13,\frac b4)$ of $K$,
then the shortest path between $p$ and $q$ is the horizontal path of
height $\frac 14b$. However, when we shift $p$ along the boundary of
$R$ a little bit closer to the center, say by $\frac 1{1000}$, then
the shortest path becomes the diagonal edge connecting the left copy
of $p$ and $q$, see Figure~\ref{f:shortest}.

\begin{figure}
\begin{center}
  \includegraphics{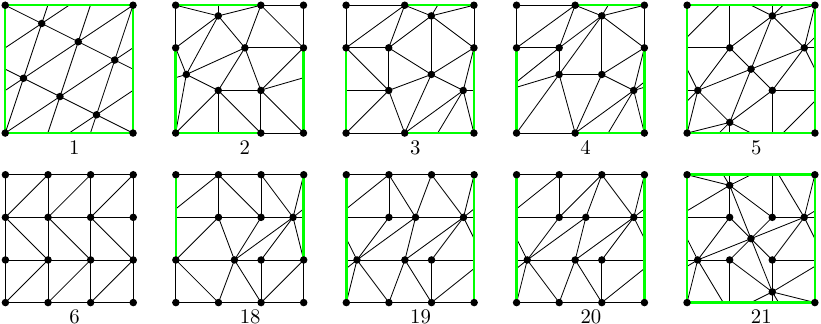}
  \caption{Minimal triangulations of the torus.}
  \label{f:torus}
\end{center}
\end{figure}

\begin{figure}
\begin{center}
  \includegraphics{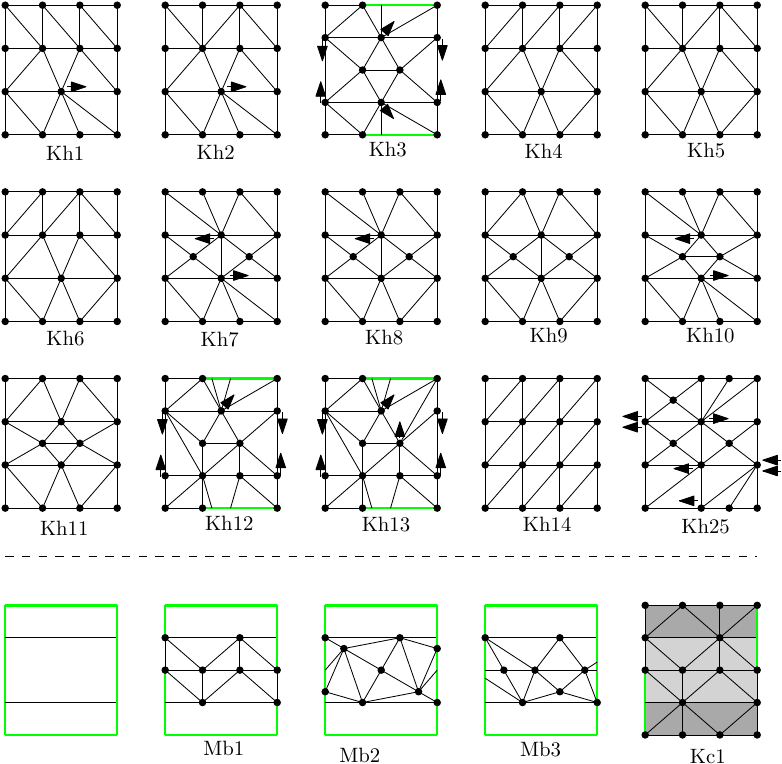}
  \caption{Minimal triangulations of the Klein bottle \changed{(for Kc1 we indicate the
  two copies of Mb1 by different shades of grey).}}
\label{f:all-cases}
\end{center}
\end{figure}

\begin{figure}
\begin{center}
\includegraphics{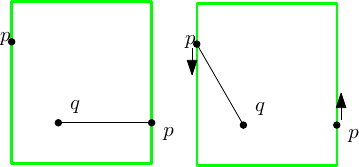}
\caption{Shortest paths in the Klein bottle.}
\label{f:shortest}
\end{center}
\end{figure}


There are 29 minimal triangulations of the Klein bottle. A list of 25
of them was first found by Lawrencenko and
Negami~\cite{ln-itkb-97}. Later on, Sulanke~\cite{s-nitkb-06} found a
gap in the claimed completeness of this list and provided a complete
list containing 4 additional triangulations. These triangulations
split into two classes. The \changed{25} triangulations of the first class are
named Kh1--Kh25 and \changed{the four triangulations of the second class are named Kc1--Kc4.} 
The triangulations from the second class \changed{are those that} contain a cycle of length $3$ which splits the Klein bottle into two M\"{o}bius bands.

We \changed{begin examining} the triangulations \changed{of} the first
class. We present shortest path embeddings with uniqueness for 15 of
them; see the top three lines of Figure~\ref{f:all-cases}. We omit the
triangulations Kh15--Kh24 \changed{because they are very similar} to Kh14, only the
diagonal edges form a different pattern. 
The vertices of the triangulations are positioned in
lattice points of the lattice generated by vectors $(\frac1{12}, 0)$ and
$(0,\frac b{12})$. In some cases an additional
shift is necessary by a small value $\frac1{1000}$ (but this value is
large compared to $\varepsilon$): this is indicated by arrows next to
the vertices. (The pair of arrows in Kh25 indicates a shift by
$\frac{2}{1000}$.)
Most of the drawings are very similar to the drawings by
Negami, Lawrencenko, and Sulanke. Only for the drawings of Kh3,
Kh12, Kh13, and Kh25 we did more significant movements. It is routine (but
tedious) to check that all the edges are indeed drawn as shortest paths. For
many edges this can be checked easily. For few not so obvious cases the
general recipe is to use the universal cover approach and
Lemma~\ref{l:VoronoiCell}. 

Now let us focus on the triangulations in the second class. All of them are
obtained by gluing two triangulations of the M\"{o}bius bands along their
boundaries. In our case, we split $K$ into two bands by a cycle depicted on the
bottom left picture of Figure~\ref{f:all-cases}. There is an isometric homeomorphism which maps one
band to another and which preserves the common boundary pointwise. Therefore,
it is sufficient to present the shortest paths embeddings with uniqueness into the 
bands, as on the middle three pictures. Then we get drawings of Kc1--Kc4 using this
homeomorphism. For example, Kc1 is obtained by gluing two copies of Mb1
together\changed{, as depicted on the bottom right picture of
Figure~\ref{f:all-cases}}. The vertices on the pictures are the lattice points of the same
lattice as above with exception of two points of Mb3. The points on the
`central' cycle of Mb3 have coordinates $(\frac16, \frac b2)$, $(\frac 49,
\frac b2)$, and $(\frac 89, \frac b2)$. Note that we have \changed{significantly} redrawn the original
drawings of Lawrencenko and Negami~\cite{ln-itkb-97}, but it is
easy to check that we get the same triangulations, because \changed{each triangulation of the M\"{o}bius band Mb1--Mb3} is quite small.

\section{Square flat metric on the Klein bottle}\label{S:klein}

The task of this section is to prove the following theorem.

\tklein*

We consider the minimal triangulation Kc1 (see
Figure~\ref{f:all-cases}\changed{, bottom, right}) and we denote by $G$ the underlying graph
for this triangulation. We will prove that $G$ does not admit a
shortest path embedding into $K$ with the square metric.  First, we
observe that the triangulation Kc1 is the only embedding of $G$ into
$K$.

\begin{proposition}
\label{p:unique_Kc1}
$G$ has a unique embedding into the Klein bottle.
 \end{proposition}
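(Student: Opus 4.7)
The plan is to show that every topological embedding of $G$ into $K$ is combinatorially equivalent to the triangulation Kc1, from which uniqueness up to a self-homeomorphism of $K$ follows. Set $n = |V(G)|$ and $m = |E(G)|$; since Kc1 is a triangulation, we have $m = 3n$.

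First I would establish that any embedding of $G$ in $K$ is a cellular triangulation. For an arbitrary embedding, Euler's inequality gives $v - e + f \geq \chi(K) = 0$, hence $f \geq m - n = 2n$. On the other hand, each face is bounded by a closed walk of length at least three, and each edge lies on the boundary of at most two faces (counted with multiplicity), so $3f \leq 2m = 6n$, i.e.\ $f \leq 2n$. Equality forces $f = 2n$, the embedding to be cellular, and every face to be bounded by a $3$-cycle of $G$ traversed exactly once, with each edge incident to exactly two distinct triangular faces.

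Next I would show that the set of face-bounding $3$-cycles is uniquely determined. This reduces to classifying all families $\mathcal{F}$ of $2n$ triangles of $G$ such that each edge lies in exactly two members of $\mathcal{F}$. I would enumerate the $3$-cycles of $G$ directly — a small, finite task since $G$ is a minimal triangulation on few vertices — and then exploit the distinctive structure of Kc1 as two copies of the M\"obius-band triangulation Mb1 glued along a $3$-cycle $C$. I would locate $C$ combinatorially in $G$, for instance as the unique $3$-cycle whose removal from the triangle-incidence structure splits it into two symmetric parts, and then invoke the easy analogue of the same counting argument on each M\"obius band (where Mb1 is small enough to admit a direct case check) to conclude that the two halves of $\mathcal{F}$ are forced. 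Once the face set is fixed, the rotation system together with the edge signatures are determined, so the combinatorial map of the embedding must coincide with that of Kc1, and two embeddings realizing the same combinatorial map on $K$ differ by a self-homeomorphism, as required.

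The main obstacle is the combinatorial analysis in the middle step: $G$ may contain $3$-cycles that are not faces of Kc1, and one must rule out the possibility that such a "spurious" triangle appears in some alternative face set $\mathcal{F}$. The cleanest route, rather than brute enumeration, is to use the separating $3$-cycle structure to reduce the problem to two copies of a M\"obius band, within which the analogous uniqueness statement for Mb1 is easily verified by a small case check.
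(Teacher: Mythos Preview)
Your Euler-formula opening (forcing any embedding to be a cellular triangulation) is correct and matches the argument the paper uses elsewhere for Proposition~\ref{P:embeddings}. From there, however, the paper takes a very different and shorter route. It records that $G$ has degree sequence $(8,8,8,5,5,5,5,5,5)$ and notes that Kc1 is the only irreducible triangulation of $K$ with that sequence; hence any other embedding is either non-cellular or reducible. Non-cellular is ruled out because it would force $G$ to embed in $S^2$ or $\RP^2$, which fails since $G$ (being two copies of $K_6$ glued along a triangle) contains $K_5\oplus K_5$ as a minor. Reducibility is ruled out because an edge contraction cannot lower the degree of all three degree-$8$ vertices, yet a simple triangulation on $8$ vertices cannot have a vertex of degree $8$. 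This leverages the irreducible-triangulation list already central to the section and avoids any face enumeration.

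Your route can in principle be completed, but the step you defer is heavier than you suggest and, as stated, has two soft spots. First, the face set is not literally unique: the automorphism group of $G$ is large and acts nontrivially on the collection of admissible face sets, so what you must actually prove is that every admissible $\mathcal{F}$ yields a complex \emph{isomorphic} to Kc1, not that $\mathcal{F}$ is unique as a set. Second, the reduction to the two M\"obius bands is not automatic. It is true that every $3$-cycle of $G$ lies inside one of the two copies of $K_6$, which is the key structural fact; but you still have to rule out that $abc$ itself is a face, or that the three $C$-edges are covered twice from one side and zero times from the other, then argue that (up to relabelling) there is essentially one M\"obius-band triangulation with $1$-skeleton $K_6$ and prescribed boundary $abc$, and finally that any pair of such halves glues to give Kc1. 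None of this is deep, but it is a genuine case analysis rather than a one-line remark; without it the argument is incomplete, and the paper's irreducibility argument sidesteps all of it.
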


 \begin{proof}[Proof of Proposition~\ref{p:unique_Kc1}]
$G$ has $9$ vertices and their degrees
are $(8,8,8,5,5,5,5,5,5)$. Since no other irreducible triangulation of
the Klein bottle has this degree sequence, any other hypothetical
embedding of $G$ into the Klein bottle is either non-cellular or has a
reducible edge. In the first case, it means that $G$ is cellularly
embeddable into the sphere or the projective plane, which is not the
case. Indeed, it is obtained as the gluing of two copies of $K_6$
along a triangle, and therefore contains $K_5 \oplus K_5$ (two copies
of $K_5$ identified along an edge minus that edge) as a minor, which
does not embed into the projective plane~\cite[Figure~6.4]{mt-gs-01}. In
the latter case, we observe that an edge contraction cannot decrease
the degree of all three degree $8$ vertices, and thus we reach a
contradiction since a triangulation on $8$ vertices cannot have a
degree $8$ vertex.
\end{proof}


For contradiction, let us assume that $G$ admits a shortest paths embedding into
$K$. We know that Kc1 is obtained by gluing two triangulations of a M\"{o}bius
band along a cycle of length $3$ (the triangle corresponding to this cycle is
not part of the triangulation). Let $abc$ be this cycle. With a slight abuse of
notation we identify this cycle with its image in the (hypothetical) shortest
path embedding into $K$. Our strategy is to show that already $abc$ cannot be embedded into
$K$ with shortest path edges, which will give the required contradiction.
By Proposition~\ref{p:unique_Kc1}, we know that $abc$ splits $K$ to two
M\"{o}bius bands.



Let $X = \R^2$ be the universal
cover of $K$ (with standard Euclidean metric). 
Let $\pi\colon X \to K$ be the isometric projection corresponding to the cover. 
We will represent the Klein bottle with the flat-square metric as the unit square 
$[0,1]^2$ with suitable identification of the edges ($aba^{-1}b$, as in the previous
section). We will use the convention that
$\pi((0,1)^2) = (0,1)^2$; that is, the projection is the identity on the
interior of this square. See Figure~\ref{f:uc}.

Given a point $p \in K$ we set
$X_p := \pi^{-1}(p)$. Finally, let $\V_p$ be the Voronoi diagram in $X$
corresponding to the set $X_p$.   

\begin{lemma}
\label{l:VoronoiCell}
  Let $p$ and $q$ be two points in $K$ and $\gamma$ be an arc (edge) connecting
  them, considered as a subset of $K$. Then $\gamma$ is the unique shortest path 
  between $p$ and $q$ if and only if there are $p' \in X_p$, $q' \in X_q$ such
  that $\gamma = \pi(\overline{p'q'})$ where $\overline{p'q'}$ denotes the straight edge
  connecting $p'$ and $q'$ in $X$ and $q'$ belongs to the open Voronoi cell for $p'$ in
  $\V_p$.
\end{lemma}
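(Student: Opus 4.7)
The plan is to exploit that $\pi\colon X\to K$ is a Riemannian covering, hence a local isometry, so that lengths of paths are preserved under lifting, and then to match shortest paths in $K$ with straight Euclidean segments in $X=\R^2$ between appropriate lifts. Throughout, I use that the deck transformation group $\Gamma$ acts freely by isometries with $X_p=\Gamma\cdot p'$ and $X_q=\Gamma\cdot q'$ for any chosen lifts $p'\in X_p$, $q'\in X_q$.

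For the forward implication, assume $\gamma$ is the unique shortest path between $p$ and $q$. I would fix any $p'\in X_p$ and lift $\gamma$ to a path $\tilde\gamma$ in $X$ starting at $p'$; by path lifting, its other endpoint is some $q'\in X_q$, and $\mathrm{length}(\tilde\gamma)=\mathrm{length}(\gamma)$. Then $\tilde\gamma$ must be the straight segment $\overline{p'q'}$: any strictly shorter path from $p'$ to $q'$ in $X$ would project via $\pi$ to a strictly shorter path in $K$ from $p$ to $q$, contradicting that $\gamma$ is shortest; hence $\tilde\gamma$ is length-minimizing in the Euclidean plane and therefore equals $\overline{p'q'}$. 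For the Voronoi condition, suppose for contradiction that $q'$ lies outside the open Voronoi cell of $p'$, so some $p''\in X_p\setminus\{p'\}$ satisfies $|p''q'|\le|p'q'|$. Letting $\varphi\in\Gamma$ be the deck transformation sending $p''$ to $p'$, the segment $\overline{p''q'}$ projects to a path in $K$ from $p$ to $q$ of length at most $d_K(p,q)$, and its lift starting from $p'$ ends at $\varphi(q')\neq q'$, so this projected path is distinct from $\gamma$, contradicting uniqueness.

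For the reverse implication, assume $q'$ lies in the open Voronoi cell of $p'$ in $\V_p$. Then $\pi(\overline{p'q'})=\gamma$ has length $|p'q'|$. Given any path $\delta$ in $K$ from $p$ to $q$, I lift it to $\tilde\delta$ starting at $p'$; its endpoint is some $q''\in X_q$ and $\mathrm{length}(\delta)=\mathrm{length}(\tilde\delta)\ge|p'q''|$. Since $\Gamma$ acts transitively on $X_q$ with unique $\varphi\in\Gamma$ satisfying $\varphi(q')=q''$, isometricity gives $|p'q''|=|\varphi^{-1}(p')\,q'|$ with $\varphi^{-1}(p')\in X_p$. The Voronoi hypothesis then yields $|\varphi^{-1}(p')\,q'|\ge|p'q'|$ with equality only if $\varphi=\mathrm{id}$, whence $\mathrm{length}(\delta)\ge|p'q'|$ with equality forcing $q''=q'$, $\tilde\delta=\overline{p'q'}$, and $\delta=\gamma$.

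The only real subtlety is that the Voronoi condition is stated in terms of the orbit $X_p$, whereas bounds on path lengths naturally arise by lifting paths with a fixed starting lift $p'$ and varying the endpoint over $X_q$; the deck-transformation observation above bridges the two, converting distances from $q'$ to $X_p$ into distances from $p'$ to $X_q$. Beyond this bookkeeping, the argument is a routine combination of path lifting in a Riemannian cover with the fact that straight segments are the unique length-minimizers in $\R^2$.
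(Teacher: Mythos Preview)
Your proof is correct and follows essentially the same approach as the paper: both use that $\pi$ is a length-preserving covering, so shortest paths in $K$ correspond to straight segments in $X$ realizing the minimal distance between the fibers $X_p$ and $X_q$, and then translate this into the Voronoi condition. The only difference is that where the paper invokes ``symmetry'' to pass from minimizing over $X_q$ (with $p'$ fixed) to minimizing over $X_p$ (with $q'$ fixed), you spell this out explicitly via deck transformations; this is a welcome clarification but not a different argument.
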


\begin{proof}[Proof of Lemma~\ref{l:VoronoiCell}]
  Any path $\kappa$ with endpoints $p$ and $q$ lifts to some path $\kappa'$ with
  endpoints $p' \in X_p$ and $q' \in X_q$ ($\kappa'$, $p'$, and $q'$ are not
  determined uniquely). This lift preserves the length of the path. Vice versa, any path 
  connecting a point in $X_p$ with a point in $X_q$ projects to a path
  connecting $p$ and $q$ (not necessarily simple), again preserving the length.

  Therefore, $\gamma$ is the shortest path in $K$ connecting $p$ and $q$ if and only if it
  lifts to a straight edge realizing the distance between $X_p$ and $X_q$ in $X$. 
  Such an edge connects $p' \in X_p$ and $q' \in X_q$.
  By symmetry, we can fix $q'$ arbitrarily and we look for the closest $p'$. Then,
  a point $p'$ is the unique point of $X_p$ closest to $q'$ if and only if $q'$ 
  belongs to the open Voronoi cell for $p'$ in $\V_p$. This is what we need.
\end{proof}


Now let us lift the cycle $abc$ to a path $a'b'c'a''$
in $X$; see Figure~\ref{f:uc}. 
Given a curve in $X$, we call the length of its projection to the $x$-axis, the
``horizontal length'' of the curve; similarly we speak about the horizontal distance and the
vertical distance of two points in $X$.

\begin{figure}
\begin{center}
  \includegraphics{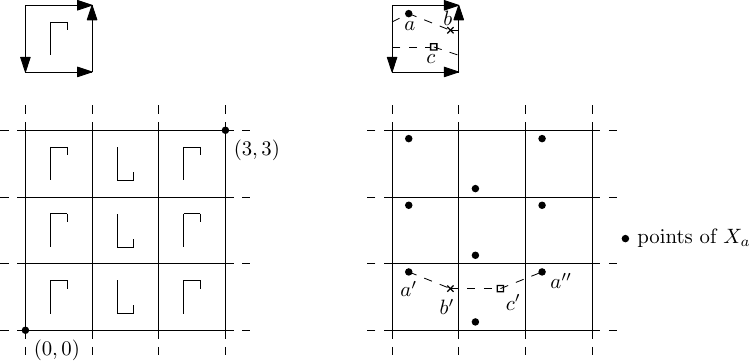}
  \caption{The Klein bottle with a letter `$\Gamma$', and its universal cover
  (left). A lift of the cycle $abc$ (right).}
  \label{f:uc}
\end{center}
\end{figure}

\begin{lemma}
\label{l:a'a''}
The horizontal distance between $a'$ and $a''$ is at least $2$.
\end{lemma}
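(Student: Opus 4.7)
The plan is to identify the deck transformation $T=T_{abc}$ satisfying $a''=T(a')$ and then read off its horizontal component from the integral homology class of the loop $abc$.

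First I would argue that $T$ is a pure translation. Since $abc$ separates $K$ into two Möbius bands it is a two-sided simple closed curve, so the holonomy of the flat metric around $abc$ is trivial; equivalently, the linear part of $T$ must be the identity rather than the flip $F=\mathrm{diag}(1,-1)$. Among the elements of $\Gamma=\langle\alpha,\beta\rangle$ with $\alpha(x,y)=(x,y+1)$ and $\beta(x,y)=(x+1,1-y)$, the pure translations are precisely the elements $\alpha^n\beta^{2m}:(x,y)\mapsto(x+2m,y+n)$, so $T$ is translation by $(2m,n)$ and the horizontal distance between $a'$ and $a''$ equals $|2m|$. The task thus reduces to showing $m\neq 0$.

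Next I would compute the class $[abc]\in H_1(K;\Z)\cong\Z\langle a\rangle\oplus(\Z/2)\langle b\rangle$. Let $\gamma$ denote the core of one of the bounding Möbius bands $M$. The boundary of a Möbius band is freely homotopic to twice its core, so $[abc]=2[\gamma]$ already in $H_1(M;\Z)$; pushing forward under $M\hookrightarrow K$ yields $[abc]=2[\gamma]$ in $H_1(K;\Z)$ as well. Because $\gamma$ is a one-sided simple closed curve in $K$, its mod-$2$ class pairs non-trivially with $w_1(K)$, which forces the $[a]$-coordinate of $[\gamma]$ to be odd: $[\gamma]=(2k+1)[a]+\epsilon[b]$ for some $k\in\Z$ and $\epsilon\in\{0,1\}$. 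Doubling and using $2[b]=0$ yields $[abc]=(4k+2)[a]$.

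Finally I would translate this homology information into $\Gamma$. The loop $a$ corresponds to the deck element $T_a=\beta\alpha$, whose square is $T_a^2=\beta^2$, namely the horizontal translation by $(2,0)$. Hence the class $(4k+2)[a]$ in the abelianization $\Gamma^{\mathrm{ab}}$ corresponds to a translation whose horizontal component has absolute value $|4k+2|\ge 2$, giving $|2m|\ge 2$ as claimed. The main subtlety lies in the homology step: because $K$ is non-orientable, $abc$ being separating does not force $[abc]=0$ in $H_1(K;\Z)$ (it only implies nullity modulo $2$), so Poincaré duality does not apply directly, and one must extract the integral class from the Möbius-band decomposition together with the one-sidedness of the core.
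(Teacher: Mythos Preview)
Your argument is correct and uses the same ingredients as the paper's proof---the two-sidedness of the separating $3$-cycle and its integral and mod-$2$ homology classes---but packages them differently. The paper proceeds by case elimination: it records that the cycle is two-sided, trivial in $H_1(K;\Z_2)$, and nontrivial in $H_1(K;\Z)$, and then rules out horizontal distance $1$ (which would force one-sidedness) and horizontal distance $0$ (which, depending on the parity of the vertical distance, would force either $\Z_2$-nontriviality or $\Z$-triviality). You instead compute the homology class directly as $2[\gamma]=(4k+2)$ times the free generator, using that the boundary of a M\"obius band is twice its core and that the core is one-sided, and then read the horizontal shift $2m=4k+2$ off the abelianisation of the deck group. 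Your route is a bit more algebraic and actually yields the sharper conclusion that the horizontal distance is $\equiv 2 \pmod 4$; the paper's case check is more hands-on and avoids naming $w_1$ or the core explicitly. One cosmetic remark: your symbols $a,b$ for the edges of the fundamental polygon collide with the paper's use of $a,b,c$ for the vertices of the separating triangle, so in a clean write-up you should rename one of the two.
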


\begin{proof}[Proof of Lemma~\ref{l:a'a''}] 
If we consider the point $a'$ fixed, then the position of $a''$ in $X_a$ determines the
homotopy class of the cycle $abc$ in $\pi_1(K)$. Therefore, it also determines
the homology classes of this cycle in $H_1(K; \Z_2)$ and in $H_1(K; \Z)$. We
note that the cycle $abc$ must be homologically trivial in $H_1(K; \Z_2)$
because it bounds a M\"{o}bius band; however, it is homologically nontrivial in
$H_1(K; \Z)$ because it bounds a M\"{o}bius band (which is non-orientable) 
on both sides. In addition the cycle $abc$ is two sided, that is, its (regular)
neighborhood is an annulus and not a M\"{o}bius band.
  
The horizontal distance between $a'$ and $a''$ must be a non-negative integer.
We will rule out the cases when this distance is $0$ or $1$. 

If this distance is $1$, then the cycle $abc$ is not two-sided (this can be
read on the lift), a contradiction.

If the horizontal distance is $0$ and the vertical distance is odd, then $abc$
is homologically nontrivial in $H_1(K; \Z_2)$. (It is
sufficient to consider the segment connecting $a'$ and $a''$ and project it to a
cycle $z$ in $K$. Then $z$ is homotopy equivalent to $abc$.) 
A contradiction.

Similarly, if the horizontal distance is $0$ and the vertical distance is even,
then $abc$ is homologically trivial in $H_1(K; \Z)$. (Again we project the
segment connecting $a'$ and $a''$.) A contradiction.
\end{proof}

\begin{lemma}
\label{l:5/8}
Let $\gamma$ be a unique shortest path in $K$ connecting points $p$ and $q$. Let
$\gamma'$ be a lift of $\gamma$ with endpoints $p'$ and $q'$. Then the
horizontal distance in $X$ between $p'$ and $q'$ is less than $\frac58$.
\end{lemma}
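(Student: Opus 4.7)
The plan is to use Lemma~\ref{l:VoronoiCell} and reduce the claim to bounding the horizontal extent of a single Voronoi cell in $\V_p$. Since $\gamma$ is the unique shortest path between $p$ and $q$, we may choose lifts $p' \in X_p$ and $q' \in X_q$ so that $q'$ lies in the open Voronoi cell of $p'$ in $\V_p$. Writing $p = (x_0, y_0)$ with $x_0, y_0 \in [0, 1)$ and taking $p' = (x_0, y_0)$, set $u = x(q') - x_0$ and $v = y(q') - y_0$, so that the horizontal distance in question is $|u|$. The reflection $(x, y) \mapsto (2x_0 - x, y)$ preserves $X_p$ and hence $\V_p$, so by symmetry it suffices to show $u < 5/8$.

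Next I write down $X_p$ explicitly. The scheme $aba^{-1}b$ with the unit-square metric glues the top and bottom edges coherently and the left and right edges with a flip, so the deck group of $X = \R^2$ is generated by the translation $\alpha(x, y) = (x, y+1)$ and the glide reflection $\beta(x, y) = (x+1, 1-y)$. Using the relation $\beta^{-1}\alpha\beta = \alpha^{-1}$ to reduce to normal form $\alpha^k \beta^n$, one checks that
\[
X_p = \{(x_0 + 2m,\, y_0 + k) : m, k \in \Z\} \,\cup\, \{(x_0 + 2m + 1,\, -y_0 + k) : m, k \in \Z\},
\]
splitting the preimages into \emph{even columns} (vertical offset $y_0$) and \emph{odd columns} (vertical offset $-y_0$) at horizontal spacing $1$.

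The same-column neighbors $(x_0, y_0 \pm 1)$ force $|v| < 1/2$, and the even-column neighbor $(x_0 + 2, y_0)$ only forces $u < 1$; the controlling constraints come from the odd-column neighbors $(x_0 + 1, -y_0 + k)$, $k \in \Z$. Expanding the strict Voronoi inequality $\|q' - p'\|^2 < \|q' - (x_0 + 1, -y_0 + k)\|^2$ and simplifying yields
\[
u < \frac{1 + (2y_0 - k)(2v + 2y_0 - k)}{2} = \frac{1 - v^2 + (v + 2y_0 - k)^2}{2}
\]
for every $k \in \Z$. Minimizing the right-hand side over $k$ amounts to taking $k$ nearest to $2y_0 + v$; letting $\delta := \dist(2y_0 + v, \Z) \in [0, 1/2]$, one concludes
\[
u < \frac{1 - v^2 + \delta^2}{2} \le \frac{1 + 1/4}{2} = \frac{5}{8}.
\]

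The main subtlety is recognizing that the binding constraint is not the nearest same-type neighbors $(x_0 \pm 2, y_0)$, which would give only the weak bound $u < 1$, but rather the odd-column points whose vertical offset $-y_0$ produces the extra $(v + 2y_0 - k)^2$ term and hence the $5/8$ constant. The bound is tight in the limit, attained as $v \to 0$ with $y_0 \in \{1/4, 3/4\}$, so strictness supplied by the \emph{open} Voronoi cell in Lemma~\ref{l:VoronoiCell} is essential for the final strict inequality $u < 5/8$.
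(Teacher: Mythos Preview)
Your proof is correct and follows essentially the same approach as the paper: reduce via Lemma~\ref{l:VoronoiCell} to bounding the horizontal extent of the open Voronoi cell of $p'$, and then use the nearest point in the adjacent odd column (at horizontal offset $1$, vertical offset at most $1/2$) as the competing site. The paper packages this as a two-line Pythagoras contradiction, whereas you expand the Voronoi inequality algebraically and optimize over $k$, but the content is the same; your added remarks on the explicit deck group, the irrelevance of the even-column constraint, and the tightness at $y_0\in\{1/4,3/4\}$ are correct and illuminating.
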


\begin{proof}

  \medskip
  
\renewcommand*{\windowpagestuff}{\includegraphics{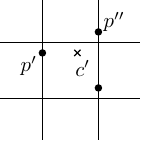}}
\opencutleft
  \begin{cutout}{2}{0pt}{11cm}{6}
  Let $C$ be the open Voronoi cell for $p'$ in $\V_p$. By
  Lemma~\ref{l:VoronoiCell}, $q'$ belongs to $C$. Therefore, it is sufficient
  to check that every point $c'$ of $C$ has horizontal distance less than $\frac58$
  from $p'$. Without loss of generality, we may assume that the $x$-coordinate
  of $p'$ equals $0$ since shifting $p'$ in horizontal direction only shifts
  $X_p$ and $\V_p$ (note that this is not true for the vertical direction). 
  For contradiction, there is a $c'$ in $C$ at distance at least $\frac
  58$ and without loss of generality the $x$-coordinate of $c'$ is positive.
  Let $p''$ be the point of $X_p$ with $x$-coordinate equal $1$ which is
  vertically closest to $c'$ (pick any suitable point in case of draw); see the picture on the left. The vertical distance between $c'$ and $p''$ is
  at most $\frac12$. A simple calculation, using the Pythagoras theorem, gives
  that $p''$ is at most as far from $c'$ as $p'$. A contradiction.
\end{cutout}
\end{proof}
 Finally, we summarize how the previous lemmas yield a contradiction.
%
 By Lemma~\ref{l:a'a''}, the horizontal distance between $a'$ and $a''$ is at
 least $2$. On the other hand, Lemma~\ref{l:5/8} gives that the horizontal 
 length of each of the edges $a'b'$, $b'c'$, and $c'a''$ is at most $\frac58$, 
 altogether at most $\frac{15}8$. This gives the required contradiction, which
 finishes the proof of Theorem~\ref{t:klein}.



\section{Asymptotically almost all hyperbolic metrics are not universal}\label{S:pants}

Before stating the main theorem of this section, we will give some very quick
background on the geometry of surfaces, we refer to Farb and
Margalit~\cite{fm-pmcg-11} for a proper introduction. The Teichm\"uller
space $\mathcal{T}_g$ of a surface $S$ of genus $g$ denotes the set of
hyperbolic metrics on $S$, such that two metrics are equivalent if
they are related by an isometry isotopic to the identity. In some
contexts, like ours, one might also want to identify metrics related
by an isometry (not necessarily isotopic to the identity). The
corresponding space is called the \textit{moduli space} $\mathcal{M}_g$ of the
surface, and is obtained by quotienting $\mathcal{T}_g$ by the
\textit{mapping class group} of $S$, i.e., its group of
homeomorphisms. This moduli space can be endowed with multiple
structures, here we will be interested in a particular one, called the
Weil--Petersson metric. This metric provides $\mathcal{M}_g$ with a
Riemannian structure of finite volume, and therefore by renormalizing,
we obtain a probability space, allowing to choose a random metric. We
can now state the main theorem of this section.

\Tnegative*

The proof is a consequence of two important results on random
hyperbolic metrics. The first is a small variant of a theorem of Guth,
Parlier, and Young~\cite[Theorem~1]{gpy-pdrs-11} that relies on the work of Wolpert~\cite{w-gwpcts-03}. Before stating it, we
need some definitions.

Given a hyperbolic metric $m$ on a surface
$S$, we say that $m$ has total pants length at least $\ell$ if in any
pants decomposition $\Gamma$ of $S$, the lengths of the closed curves of
$\Gamma$ sum up to at least $\ell$. We say that $m$ has total pants
length of type $\xi$ at least $\ell$ if in any pants decomposition $\Gamma$ of
$S$ of type $\xi$, the lengths of the closed curves of $\Gamma$ sum up
to at least $\ell$.

\begin{theorem}\label{T:gpy}
  For any $\varepsilon >0$ and any family of types of pants decomposition
  \changed{$(\xi_g)$}, a random metric on $\mathcal{M}_g$ has total pants length
  \changed{of type} $\xi_g$ at least $g^{4/3 - \varepsilon}$ with probability tending to $1$ as $g \rightarrow \infty$.
\end{theorem}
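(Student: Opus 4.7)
The plan is to derive Theorem~\ref{T:gpy} as an almost immediate consequence of the original Guth--Parlier--Young theorem, via a monotonicity observation. Let $\Lambda(m) := \min_{\Gamma} \sum_{\gamma \in \Gamma} \mathrm{length}_m(\gamma)$ denote the minimum total length taken over all pants decompositions $\Gamma$ of $S_g$, and let $\Lambda_{\xi_g}(m)$ denote the analogous quantity where the minimum is restricted to pants decompositions of topological type $\xi_g$. The original GPY theorem (whose proof combines Mirzakhani's integration formulas on moduli space, Wolpert's Weil-Petersson curvature estimates near the boundary of $\mathcal{M}_g$, and a careful union bound over all combinatorial types of pants decompositions) asserts that for every $\varepsilon>0$, a metric $m$ sampled from the normalized Weil-Petersson probability measure on $\mathcal{M}_g$ satisfies $\Lambda(m) \geq g^{4/3-\varepsilon}$ with probability tending to $1$ as $g \to \infty$.

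Given this, the variant follows by observing that the pants decompositions of a fixed type $\xi_g$ form a subfamily of all pants decompositions of $S_g$, so the minimum over the smaller family is at least the overall minimum:
\[
\Lambda_{\xi_g}(m) \geq \Lambda(m) \qquad \text{for every } m \in \mathcal{M}_g.
\]
Consequently, the event $\{\Lambda_{\xi_g}(m) \geq g^{4/3-\varepsilon}\}$ contains the event $\{\Lambda(m) \geq g^{4/3-\varepsilon}\}$ and therefore also has probability tending to $1$. Since this chain of reasoning does not depend on the particular choice of $\xi_g$, it applies uniformly to any pre-specified family of types $\{\xi_g\}_{g \geq 1}$, which is exactly what is required.

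The only substantive obstacle lies within the original GPY result itself, which this variant invokes as a black box; the passage from GPY to Theorem~\ref{T:gpy} adds nothing beyond the pointwise inequality $\Lambda_{\xi_g} \geq \Lambda$. If one wished to reprove Theorem~\ref{T:gpy} from scratch, the same Mirzakhani--Wolpert machinery would have to be traversed, but the union bound could actually be simplified, since only the single topological type $\xi_g$ would need to be considered rather than all combinatorial types simultaneously.
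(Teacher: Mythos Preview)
Your monotonicity observation $\Lambda_{\xi_g}(m) \geq \Lambda(m)$ is perfectly valid, but invoking the Guth--Parlier--Young theorem as a black box does \emph{not} yield the exponent $4/3$ claimed in Theorem~\ref{T:gpy}. The original GPY result (their Theorem~1) only guarantees $\Lambda(m) \geq g^{7/6-\varepsilon}$ with high probability, not $g^{4/3-\varepsilon}$. So your argument proves a weaker statement than the one asserted here.

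The discrepancy is precisely the point you mention in passing and then dismiss as inessential: the union bound over all combinatorial types of pants decompositions. In GPY's argument the Fenchel--Nielsen volume estimate for a \emph{single} type gives a region of Weil--Petersson volume $\lesssim (L/g)^{6g}$, while the total volume of $\mathcal{M}_g$ is $\approx g^{2g}$; comparing these forces $L \lesssim g^{4/3}$. But GPY must then sum over roughly $g^{g}$ topological types, which degrades the threshold to $g^{7/6}$. The paper's proof of Theorem~\ref{T:gpy} is exactly the single-type volume estimate \emph{without} the union bound, and this is what recovers the sharper exponent $4/3$. In other words, the improvement from $7/6$ to $4/3$ is the whole content of the ``variant'', and it cannot be obtained by the monotonicity shortcut; one really must redo the (short) volume computation for the fixed type $\xi_g$, as the paper does.
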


\begin{proof}[Proof of Theorem~\ref{T:gpy}]
This bound is obtained with a similar technique as the proof of Theorem~1 of Guth, Parlier, and Young~\cite{gpy-pdrs-11}. We refer to their article for more details, and as in their proof, we will discard non super-exponential terms, e.g., $n! \approx n^n$.
For every $a,b,c \in \mathbb R^+$ there exists a unique hyperbolic metric on a pair of pants with boundary lengths $a,b$ and $c$ \changed{(see for example Ratcliffe~\cite[Theorem~9.7.3]{r-fhm-06})}. For a pants 
  decomposition of fixed type $\xi_g$, the Weil--Petersson volume form on moduli space is the push forward of the form
   $d\ell_1 \wedge \ldots \wedge d\ell_{3g-3} \wedge d\tau_1
  \wedge \ldots \wedge d\tau_{3g-3}$ on Teichm\"uller space which is identified with $\mathbb R^{6g-6}$ and the $\ell_i$ denote the
  lengths of the (geodesic) boundaries of the pants decomposition, while
  the $\tau_i$ quantify how much the metric \textit{twists} around
  each geodesic. Since every full twist gives a homeomorphic metric,
 the subset of Teichm\"uller space $\{(\ell_i, \tau_i) \mid \sum_i \ell_i \leq L, 0 \leq \tau_i
  \leq \ell_i\}$ projects surjectively onto the region of moduli space corresponding to 
  surfaces with total pants length of type $\xi_g$ at most $L$. The volume of this set is bounded by
$\lesssim (\frac{L}{g})^{6g}$, which is to be compared with the total volume of moduli space $\approx g^{2g}$. For $L$ smaller than $g^{4/3-\varepsilon}$, the ratio tends to zero, which proves the theorem.  
\end{proof}

%

The following is an immediate corollary of this theorem.

\begin{corollary}\label{map-pants}
Let $T_g$ be a family of triangulations of $S_g$, such that every member of $T_g$ contains a pants decomposition of fixed type $\xi_g$. For any $\varepsilon >0$, with probability tending to $1$ as $g \rightarrow \infty$, a shortest embedding of $T_g$ into a random hyperbolic surface of genus $g$ has length at least $\Omega(g^{4/3-\varepsilon})$.
\end{corollary}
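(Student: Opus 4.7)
The plan is to deduce the corollary directly from Theorem~\ref{T:gpy} by noting that any embedding of $T_g$ automatically exhibits a pants decomposition of type $\xi_g$ on $S_g$ whose total length is bounded above by the total edge length of the embedding. Concretely, for each $g$ one fixes a subgraph $\Gamma_g \subseteq T_g$ whose image realizes the pants decomposition of type $\xi_g$ guaranteed by the hypothesis. In any topological embedding of $T_g$ respecting its combinatorial map, the restriction to $\Gamma_g$ is a system of $3g-3$ disjoint simple closed curves of topological type $\xi_g$; in a shortest-path embedding each edge is a shortest path, so every closed curve of $\Gamma_g$ is a concatenation of shortest paths whose length equals the sum of the lengths of its constituent edges. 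Hence the total length of the embedded $\Gamma_g$ is at most the total length of the embedded $T_g$.

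The second step is to bound the length of the embedded $\Gamma_g$ from below in terms of geodesic quantities. I would invoke the standard fact that on a closed hyperbolic surface every essential simple closed curve is freely homotopic to a unique closed geodesic, which minimizes length in its free homotopy class. Since the components of any pants decomposition are essential, this lower bounds the length of each embedded component by the length of the corresponding geodesic, so summing over the $3g-3$ curves shows that the length of the embedded $\Gamma_g$ is at least the total length of the geodesic pants decomposition of type $\xi_g$ in the metric. Applying Theorem~\ref{T:gpy} with the same $\varepsilon$ and $\xi_g$, this quantity is at least $g^{4/3-\varepsilon}$ with probability tending to $1$ as $g \to \infty$, which yields the claim.

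The only point that requires attention, rather than a deep obstacle, is that the type $\xi_g$ must be the same across all random choices of metric; otherwise one could choose a short pants decomposition of a different type for each metric and Theorem~\ref{T:gpy} would not apply. This invariance is precisely what the hypothesis on $\Gamma_g \subseteq T_g$ enforces, so the implication is genuinely immediate.
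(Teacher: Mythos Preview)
Your argument is correct and matches the paper's intent; the paper merely declares the corollary immediate from Theorem~\ref{T:gpy} without supplying details. One minor redundancy: since the paper's definition of ``total pants length of type $\xi$'' already quantifies over \emph{all} pants decompositions of type $\xi$ (not just geodesic ones), your tightening-to-geodesics step is unnecessary---the embedded $\Gamma_g$ is itself a pants decomposition of type $\xi_g$, so Theorem~\ref{T:gpy} bounds its length directly.
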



The next theorem was proved by Mirzakhani~\cite[Theorem~4.10]{m-gwpvrh-13}.
\begin{theorem}\label{Mirzakhani}
With probability tending to $1$, the diameter of a random hyperbolic surface of genus g is $O(\log g)$.
\end{theorem}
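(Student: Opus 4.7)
The plan is to combine the exponential volume growth inherent to hyperbolic geometry with probabilistic control of short geodesics via Mirzakhani's integration formulas on moduli space. The guiding heuristic is as follows: by Gauss--Bonnet, $\mathrm{area}(X) = 4\pi(g-1) = \Theta(g)$, while in $\mathbb{H}^2$ a ball of radius $r$ has area $2\pi(\cosh r - 1) = \Theta(e^r)$. If every such ball lifted cleanly to a true ball in the universal cover, then $B(x, C \log g)$ would already have area $\gtrsim g^C$, exceeding $\mathrm{area}(X)$ once $C>1$, and would therefore force $\mathrm{diam}(X) \le 2C\log g$. The entire difficulty lies in quantifying how often, and how badly, balls on $X$ fail to be embedded because short loops pass through their centers.

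First, I would invoke the Margulis thick--thin decomposition $X = X^{\mathrm{thick}} \sqcup X^{\mathrm{thin}}$, where $X^{\mathrm{thin}}$ is a disjoint union of standard collars around primitive closed geodesics of length at most a universal constant $\mu_0$. Mirzakhani's integration formula on $\mathcal{M}_g$ expresses the expected value of any mapping-class-group-invariant functional built from geodesic lengths as an integral against Weil--Petersson volumes $V_{g',n}$ of smaller moduli spaces. Applying it to the count (and total collar area) of closed geodesics of length at most $\mu_0$, and feeding in her asymptotic estimates on the ratios $V_{g,n}/V_g$, one obtains $\mathrm{area}(X^{\mathrm{thin}}) = o(g)$ with probability tending to $1$. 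In particular almost all of $X$ sits inside $X^{\mathrm{thick}}$, and every point of $X^{\mathrm{thick}}$ has injectivity radius at least $\mu_0/2$, so that small balls there are embedded with area bounded below.

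Next I would upgrade embeddedness from scale $\mu_0$ to scale $R = C\log g$. For any $x \in X^{\mathrm{thick}}$ the area of $B(x,r) \subset X$ is at least $\mathrm{area}_{\mathbb{H}^2}(B(\tilde x, r))/N(x, 2r)$, where $N(x,L)$ counts homotopy classes of loops based at $x$ of length $\le L$. Any such loop is either freely homotopic to a short closed geodesic, or enters a short-geodesic collar whose core has length at most $O(L)$. Applying Mirzakhani's integration formula once more to count primitive closed geodesics of length at most $2R = O(\log g)$, the expected count is polynomial in $g$; a second-moment concentration argument (bounding pairwise covariances by another application of the integration formula, then taking a union bound over a $\mu_0/2$-net of base points) upgrades this to a uniform-in-$x$ polynomial bound on $N(x,2R)$. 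Therefore $\mathrm{area}(B(x,R)) \gtrsim g^C / \mathrm{poly}(g) \gg g$ once $C$ is chosen large enough, forcing $B(x,R) = X$.

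Finally, every point of $X^{\mathrm{thin}}$ lies in a collar of a short geodesic of length $\ell$, which has width $O(\log(1/\ell))$, and on a random surface $\ell_{\min} \ge g^{-O(1)}$ with high probability (again by a Mirzakhani-type estimate on very short geodesics), so the collar widths are also $O(\log g)$. Combining these two bounds yields $\mathrm{diam}(X) = O(\log g)$ with probability tending to $1$. The hardest step is clearly the uniform-in-$x$ control in the previous paragraph: turning an expectation estimate on the \emph{number} of short closed geodesics into a statement about \emph{loops based at an arbitrary point}. This requires higher-moment bounds via the integration formula followed by a net argument, and it is where essentially all of the analytic content of the theorem resides.
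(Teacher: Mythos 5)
The paper does not prove this statement at all---it is cited verbatim as a black box from Mirzakhani~\cite{m-gwpvrh-13} (``The next theorem was proved by Mirzakhani'')---so there is no in-paper proof to compare against. Your proposal, however, has a genuine quantitative gap in the ``ball growth'' step that drives it, and the overall strategy departs from how the diameter bound is actually established.

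The pivotal inequality you use is $\mathrm{area}(B(x,R)) \geq \mathrm{area}_{\mathbb{H}^2}(B(\tilde x, R)) / N(x,2R)$, followed by the claim that $N(x,2R)$ is polynomial in $g$, hence $\mathrm{area}(B(x,R)) \gtrsim g^C/\mathrm{poly}(g) \gg g$. This does not close. Both numerator and denominator are polynomial in $g$, but with competing exponents: $\mathrm{area}_{\mathbb{H}^2}(B(\tilde x, R)) \asymp e^R = g^C$, whereas the number of deck transformations $\gamma$ with $d(\tilde x, \gamma\tilde x) \leq 2R$ is of order at least $e^{2R}/\mathrm{area}(X) \asymp g^{2C-1}$ by the orbit-growth (lattice-point) count for a cocompact Fuchsian group, and this holds deterministically for every surface, even with systole bounded below by a constant. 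The quotient is then $\lesssim g^{1-C}$, which is \emph{smaller} than $g$ for every $C>0$ and tends to $0$ for the regime $C>1$ you would need. Sharpening the estimate on $N(x,2R)$ via Mirzakhani's integration formula and a net over base points cannot rescue this: $e^{2R}/\mathrm{area}(X)$ is a \emph{lower} bound on the orbit count, so the loss is intrinsic to the multiplicity bound rather than to averaging. A lower bound on the injectivity radius alone never gives exponential growth of ball volume out to radius $\Theta(\log g)$; it only controls the scale-$O(1)$ geometry.

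The missing ingredient is an expansion/isoperimetric input, which is exactly what Mirzakhani's approach supplies. Her proof first establishes that with high probability over $\mathcal{M}_g$ there is no short separating multicurve relative to the area it cuts off, yielding a uniform lower bound on the Cheeger constant $h(X)$; the $O(\log g)$ diameter bound then follows from the standard consequence that a lower bound on $h(X)$ forces $\mathrm{area}(B(x,r+1)) \geq (1+\delta)\,\mathrm{area}(B(x,r))$ for as long as $\mathrm{area}(B(x,r)) \leq \tfrac12\,\mathrm{area}(X)$. This is precisely the scale-by-scale volume growth that your covering-multiplicity argument fails to produce. Your auxiliary estimates---that $\mathrm{area}(X^{\mathrm{thin}}) = o(g)$ and that collar half-widths are $O(\log g)$ once the systole is $g^{-O(1)}$---are reasonable, but they cannot substitute for the Cheeger-type estimate that the argument actually turns on.
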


Theorem~\ref{negative} is proved by providing an explicit family of
graphs $G_g$ which will embed badly. It is
defined in the following way for $g \geq 2$. Let $\xi_g$ be a type of
pants decompositions for every value of $g$.
\begin{itemize}
\item We start with a pants decomposition of type $\xi_g$ of a surface $S_g$.
\item We place four vertices on every boundary curve.
\item We triangulate each pair of pants with a bounded size triangulation 
  so that each cycle of length $3$ bounds a triangle in the triangulation, and any
  path connecting two boundary components of the pair of pants has length at
  least $4$ (in particular $G_g$ is a simple graph and each cycle of length $3$
  in the graph $G_g$ bounds a triangle in the triangulation).  
\end{itemize}

The following proposition controls the issues related to the flexibility of embeddings of graphs into surfaces.

\begin{proposition}\label{P:embeddings}
There is a unique embedding of $G_g$ into $S_g$, up to a homeomorphism; in
particular every embedding contains a pants decomposition of type $\xi_g$.
\end{proposition}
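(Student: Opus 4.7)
The strategy is to show that any embedding of $G_g$ into $S_g$ is combinatorially identical to the natural embedding produced by the construction, from which both claims (uniqueness up to a homeomorphism and containment of a pants decomposition of type $\xi_g$) follow immediately.

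First, by construction $G_g$ triangulates $S_g$ in its natural embedding, so $3f=2e$ together with Euler's formula yields $e(G_g)=3v(G_g)-6+6g$, the maximum number of edges of a simple graph embeddable in $S_g$. Any embedding of $G_g$ into $S_g$ must then also achieve this bound, so it is cellular with every face a triangle; in particular each face is bounded by a $3$-cycle of $G_g$. Moreover, the construction specifies that every $3$-cycle of $G_g$ bounds a triangle in the natural triangulation, so $G_g$ has exactly $f=2v-4+4g$ distinct $3$-cycles. Since in any embedding the $f$ triangular faces are bounded by $f$ distinct $3$-cycles of $G_g$, the set of facial triangles must be the same in every embedding, namely the full set of $3$-cycles of $G_g$, which is an invariant of the abstract graph.

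Next, I argue that this invariance determines the combinatorial map. Two incident edges $vu$ and $vw$ are consecutive in the rotation around $v$ if and only if $\{v,u,w\}$ is a facial triangle, and in the natural simplicial triangulation the link of each vertex is a single cycle on its neighbors, so the resulting consecutiveness relation is a Hamilton cycle on the link and prescribes the cyclic order at $v$ uniquely up to orientation. Hence every embedding of $G_g$ into $S_g$ has the same combinatorial map as the natural one, and by the standard correspondence between combinatorial maps and topological embeddings on orientable surfaces, any two such embeddings differ by a self-homeomorphism of $S_g$. The pants decomposition of type $\xi_g$ appearing in the construction is realized in $G_g$ by $3g-3$ vertex-disjoint $4$-cycles (the boundary cycles carrying the four marked vertices on each pants curve), and the rigidity of the combinatorial map forces the images of these $4$-cycles in any embedding to form a pants decomposition of the same type $\xi_g$.

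The main obstacle is the rotation-system step: one has to verify that knowledge of the set of facial triangles really does pin down the cyclic order of edges at every vertex. This relies on the natural embedding being a genuine simplicial triangulation so that every vertex link is a single cycle, a feature which is guaranteed precisely by the construction condition that every $3$-cycle of $G_g$ bounds a triangle and that the triangulation of each pair of pants is a simplicial complex.
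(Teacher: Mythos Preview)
Your proof is correct and follows essentially the same approach as the paper: both use Euler's formula and the handshake inequality to force every embedding of $G_g$ into $S_g$ to be a triangulation, then exploit the construction's guarantee that $G_g$ has exactly as many $3$-cycles as faces to conclude that the set of facial triangles is an invariant of the abstract graph. The only difference is cosmetic: the paper concludes in one line from ``same faces $\Rightarrow$ same cell complex $\Rightarrow$ homeomorphic embeddings,'' whereas you spell this out via rotation systems, which is a perfectly valid (if slightly more laborious) way to say the same thing.
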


\begin{proof}[Proof of Proposition~\ref{P:embeddings}]
Let $v$ be the number of vertices, $e$ be the number of edges and $t$ be the
number of triangles of the triangulation in the definition of $G_g$ (triangles in the graph-theoretical sense). By Euler's formula and by the construction we get $v - e + t = \chi$ where
$\chi$ is the Euler characteristic of $S_g$. Let us consider an embedding
$\Psi$ 
of $G_g$ into $S_g$. Let $f$ be the number of
faces of this embedding and $F$ be the set of faces. Euler's formula for this
embedding gives $v - e + f \geq \chi$ (we get an inequality because some of the
faces need not be embedded cellularly). In particular, we get $f \geq t$. On the
other hand, we get $2e = 3t$ and $2e = \sum_{\sigma \in F} \deg \sigma \geq 3f$
since each edge is in exactly two faces. This gives $3t \geq 3f$. Therefore,
both of the aforementioned inequalities have to be equalities. In particular,
each $\sigma \in F$ is a triangle bounded by a cycle of length $3$ in $G_g$. 
Since the number of cycles of length $3$ in $G_g$ equals $t = f$, we deduce
that $\Psi$ coincides with the embedding from the definition of $G_g$ up to a
homeomorphism.
\end{proof}

\textbf{Remark:} We preferred to use a hands-on construction of the graphs $G_g$, but another approach could be to rely on the theory of LEW-embeddings and use one of its results on uniqueness of embeddings, see for example Mohar and Thomassen~\cite[Corollary~5.2.3]{mt-gs-01}.

With these three results at hand we are ready to provide a proof of the theorem.

\begin{proof}[Proof of Theorem~\ref{negative}]

We use the family of graphs $G_g$ previously defined. Since there are
$O(g)$ curves in a pants decomposition, it contains $O(g)$ edges, and
every embedding of $G_g$ into $S_g$ contains a pants decomposition of type $\xi_g$ by Proposition~\ref{P:embeddings}.

Now, by Corollary~\ref{map-pants}, for every $\varepsilon>0$, and for
$g$ large enough, the probability that the shortest possible embedding
of $G_g$ into a random metric has length at least
$O(g^{4/3-\varepsilon})$ is at least $1 - \varepsilon/2$. In
particular, since there are $O(g)$ edges in $G_g$, some edge $e_g$ in
this embedding must have length at least
$\Omega(g^{1/3-\varepsilon})$. By Theorem~\ref{Mirzakhani}, we can
choose $g$ large enough so that with probability at least
$1-\frac{\varepsilon}{2}$, the random hyperbolic metric has diameter
$O(\log g)$. Hence, by the union bound, with probability
$1-\varepsilon$ both properties hold. Therefore, for every
$\varepsilon>0$, there exists some value $g_0$ such that for any $g
\geq g_0$, in any embedding of $G_g$, there exists an edge $e_g=(x,y)$
such that $\ell_m(e_g)=\Omega(g^{1/3-\varepsilon})$, but $d_m(x,y)\leq
diam(m)\leq O(\log g)$. This implies that $e$ is not drawn by a
shortest path. Similarly, subdividing each edge $O(g^{1/3-\varepsilon})$ times will run into the same issue. This concludes the proof.\end{proof}

\section{Higher genus: positive results}\label{S:concat}

\Tconcat*


Our approach to prove Theorem~\ref{T:concat} is to cut the surface
$S_g$ with a \textit{hexagonal decomposition} $\Delta$, so that every edge
of $G$ is cut $O(g)$ times by this decomposition $\Delta$. The construction
to do this is a slight modification of the \textit{octagonal
decompositions} provided by \'E. Colin de Verdi\`ere and Erickson~\cite[Theorem~3.1]{ce-tnpcs-10}. Each of the hexagons is then
endowed with a specific hyperbolic metric $m_H$, and pasting these
together yields the hyperbolic metric $m$ on $S_g$. The hyperbolic
metric $m_H$ is chosen so that the hexagons are \textit{convex}, i.e.,
the shortest paths between points of a hexagon stay within this
hexagon. Therefore, there only remains to embed the graph $G$ cut
along $\Delta$, separately in every hexagon with shortest paths. To do
this, we use a variant of a theorem of Y. Colin de
Verdi\`ere~\cite{c-crgts-91} which generalizes Tutte's barycentric
method to metrics of nonpositive curvature.

\subparagraph*{Hexagonal decompositions.} A \textit{hexagonal decomposition}, respectively an \textit{octagonal
  decomposition} of $S_g$ is an arrangement of closed curves on $S_g$
that is homeomorphic to the one pictured in
Figure~\ref{F:octagons}.b., respectively Figure~\ref{F:octagons}.a. In
particular, every vertex has degree four and every face has six sides,
respectively eight sides.

\begin{figure}
  \centering
  \def\svgwidth{\textwidth}
  
	\begingroup%
  \makeatletter%
  \providecommand\color[2][]{%
    \errmessage{(Inkscape) Color is used for the text in Inkscape, but the package 'color.sty' is not loaded}%
    \renewcommand\color[2][]{}%
  }%
  \providecommand\transparent[1]{%
    \errmessage{(Inkscape) Transparency is used (non-zero) for the text in Inkscape, but the package 'transparent.sty' is not loaded}%
    \renewcommand\transparent[1]{}%
  }%
  \providecommand\rotatebox[2]{#2}%
  \ifx\svgwidth\undefined%
    \setlength{\unitlength}{4143.37778447bp}%
    \ifx\svgscale\undefined%
      \relax%
    \else%
      \setlength{\unitlength}{\unitlength * \real{\svgscale}}%
    \fi%
  \else%
    \setlength{\unitlength}{\svgwidth}%
  \fi%
  \global\let\svgwidth\undefined%
  \global\let\svgscale\undefined%
  \makeatother%
  \begin{picture}(1,0.36559304)%
    \put(0,0){\includegraphics[width=\unitlength,page=1]{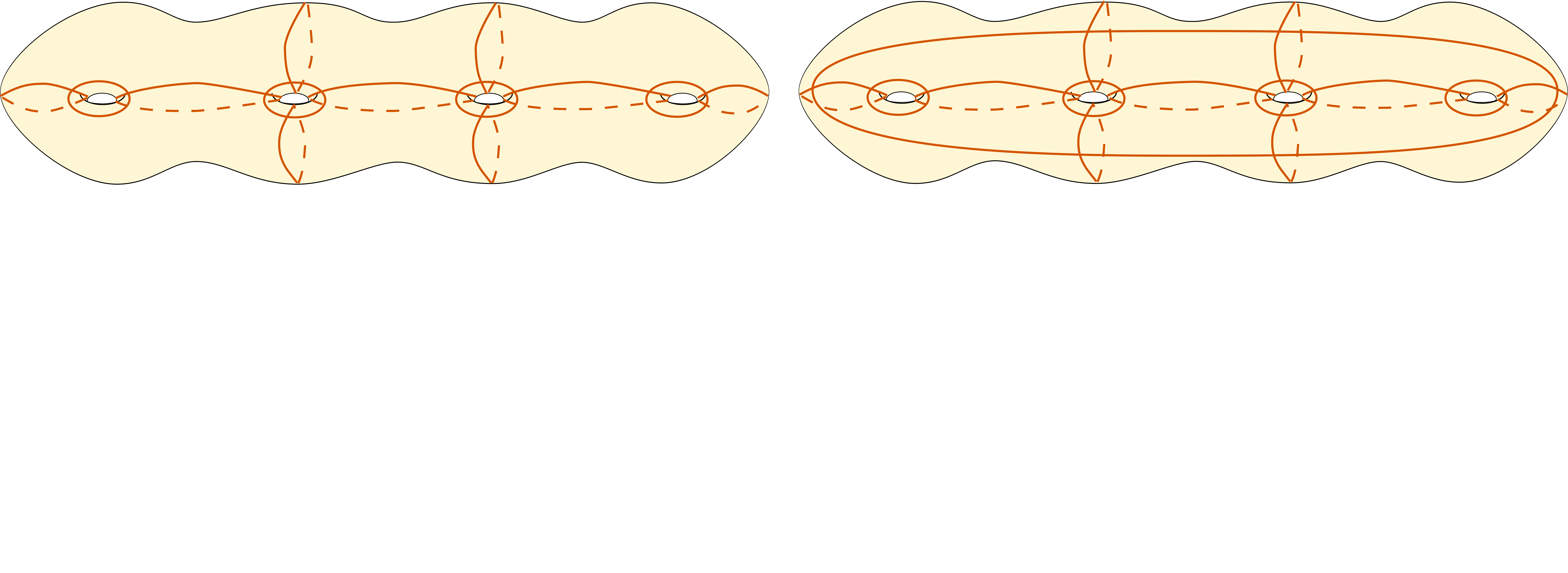}}%
    \put(0.75659616,0.23305333){\color[rgb]{0,0,0}\makebox(0,0)[lb]{\smash{b.}}}%
    \put(0,0){\includegraphics[width=\unitlength,page=2]{octagons.pdf}}%
    \put(0.24335641,0.23531179){\color[rgb]{0,0,0}\makebox(0,0)[lb]{\smash{a.}}}%
    \put(0.50336838,0.00017159){\color[rgb]{0,0,0}\makebox(0,0)[lb]{\smash{c.}}}%
  \end{picture}%
\endgroup%
	
  \caption{a. An octagonal decomposition b. A hexagonal decomposition c. How to add one closed curve to upgrade an octagonal decomposition to a hexagonal decomposition}
\label{F:octagons}
\end{figure}


Octagonal decompositions were introduced by \'E. Colin de Verdi\`ere and Erickson~\cite{ce-tnpcs-10} where they showed how to compute one that does not cross the edges of an embedded graph too many times. We restate their theorem in our language.

\begin{theorem}[{\cite[Theorem~3.1]{ce-tnpcs-10}}]\label{T:octagons}
  Let $G$ be a graph embedded in a surface $S_g$ for $g \geq 2$. There exists an
  octagonal decomposition $\Gamma$ of $S_g$ such that each edge of
  $G$ crosses each closed curve of $\Gamma$ a constant number of times.
 \end{theorem}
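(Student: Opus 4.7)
The plan is to work in the dual graph $G^*$ of $G$ in $S_g$. A closed curve on $S_g$ transverse to the edges of $G$ corresponds to a closed walk in $G^*$, where each crossing of an edge $e$ of $G$ equals one traversal of the dual edge of $e$. Thus Theorem~\ref{T:octagons} reduces to a purely combinatorial statement: construct a family of closed walks in $G^*$ whose images on $S_g$ form an octagonal decomposition, such that each walk uses each edge of $G^*$ only $O(1)$ times.

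First, I would build a cut system of $2g$ simple loops $\gamma_1,\ldots,\gamma_{2g}$ cutting $S_g$ into a $4g$-gon. I do this iteratively: at each step, find a combinatorially shortest simple non-separating closed curve in the current (partially cut) surface, measuring length as the number of dual edges it uses, and cut along it. The crucial property is that such a shortest curve is a \emph{simple} cycle in $G^*$, and thus traverses each edge of $G^*$ at most once; indeed any repeated traversal could be spliced away to produce a shorter representative of the same non-separating class, by a standard no-shortcut argument. Iterating yields the $2g$ simple loops, each of which crosses each edge of $G$ at most once.

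Next, I would upgrade the resulting $4g$-gon to an octagonal decomposition with $2g-2$ octagonal faces. Combinatorially, this amounts to subdividing the polygon by $O(g)$ diagonal arcs; on $S_g$, realize each diagonal as a shortest path in $G^*$ between the two prescribed boundary points of the $4g$-gon. Each such diagonal is a simple dual path, contributing at most one crossing per edge of $G$. Finally, assemble the loops $\gamma_i$ and the diagonals into the closed curves of $\Gamma$ by following the topological structure of an octagonal decomposition: each closed curve in $\Gamma$ is a concatenation of a bounded number of diagonal arcs and pieces of $\gamma_i$'s, smoothed at the degree-$4$ vertices so the resulting arrangement is transverse to $G$.

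The main obstacle is bounding the per-edge, per-curve crossing count after the concatenation step. Individually, each $\gamma_i$ and each diagonal crosses each edge of $G$ at most once; the danger is that a single closed curve $\gamma\in\Gamma$ might be built from many such pieces that all happen to cross the same edge $e$. The heart of the argument, following the approach of Colin de Verdi\`ere and Erickson, is to show that the $4g$-gon admits a canonical octagonal subdivision in which every closed curve of $\Gamma$ consists of only $O(1)$ such pieces, yielding a universal bound on crossings independent of $g$ and of $G$. An alternative is to invoke the greedy/tight system of loops of Erickson and Whittlesey together with an exchange argument that directly bounds, for each edge of $G^*$, the number of tight loops using it.
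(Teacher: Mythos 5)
This theorem is not proved in the present paper: it is quoted verbatim from \'E.~Colin de Verdi\`ere and Erickson~\cite{ce-tnpcs-10}, so the relevant comparison is with their construction. Your proposal does not follow it, and more importantly it has a genuine gap precisely at the point you yourself flag as ``the heart of the argument''. You build $2g$ loops forming a $4g$-gon decomposition plus $O(g)$ diagonals, and you then need that each closed curve of the octagonal decomposition is a concatenation of $O(1)$ pieces drawn from this collection. You do not establish this, and there is real reason to doubt it in your setup: the closed curves of an octagonal decomposition, expressed as words in the generators of $\pi_1(S_g)$ coming from the $4g$-gon, generically have word length growing with $g$, so concatenating along a canonical polygon does not obviously yield an $O(1)$ bound per edge and per curve. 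Invoking ``the approach of Colin de Verdi\`ere and Erickson'' at this point is circular; their Theorem~3.1 is the very statement you are trying to prove, and their construction in fact does not pass through a $4g$-gon at all. They build the octagonal decomposition directly, curve by curve, by computing shortest curves (in the cross-metric sense) in carefully chosen homotopy classes relative to previously constructed curves, and the $O(1)$ crossing bound comes from a tightening/uncrossing argument applied at each step rather than from a concatenation of polygon pieces.

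There is also a secondary hand-wave earlier on. The claim that a shortest non-separating closed curve ``traverses each edge of $G^*$ at most once'' does not follow just from simplicity: a simple closed curve transverse to $G$ can perfectly well cross a single edge of $G$ several times, and the dual walk then repeats that dual edge. The splicing argument you sketch must be done with care, because resolving a self-overlap or a repeated crossing can split the curve into two cycles, and it is not automatic that one of them is again non-separating and shorter. There are correct versions of such a lemma (shortest non-separating cycles in the cross-metric can be taken to cross each edge a bounded number of times, with a genuine exchange argument), and the same issue recurs after each cut since the surface and the relevant homotopy classes change. In short: the reduction to $G^*$, the use of shortest curves, and the general ``tighten and uncross'' philosophy are aligned with the cited work, but the specific route via a $4g$-gon does not close, and the crucial per-curve $O(1)$ bound is asserted rather than proved.
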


We observe that this octagonal decomposition can be upgraded to a hexagonal decomposition that still does not cross $G$ too much:

\begin{corollary}\label{C:hexagons}
  Let $G$ be a graph embedded in a surface $S_g$. There exists a
  hexagonal decomposition $\Delta$ of $S_g$ such that each edge of $G$
  crosses each closed curve of $\Delta$ a constant number of times,
  except for maybe one closed curve which is allowed to cross each
  edge of $G$ at most $O(g)$ times. In particular, the number of crossing
  between every edge of $G$ and $\Delta$ is $O(g)$.
\end{corollary}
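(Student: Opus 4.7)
The plan is to apply Theorem~\ref{T:octagons} to obtain an octagonal decomposition $\Gamma$ of $S_g$ satisfying the stated crossing property, and then to add a single closed curve $\gamma$ that subdivides each of the $2g-2$ octagonal faces into two hexagons, as sketched in Figure~\ref{F:octagons}.c. Setting $\Delta := \Gamma \cup \{\gamma\}$ will then yield the desired hexagonal decomposition. An Euler characteristic count first pins down what $\gamma$ must look like: an octagonal decomposition of $S_g$ has $4g-4$ vertices, $8g-8$ edges, and $2g-2$ faces, while a hexagonal decomposition has $6g-6$, $12g-12$, and $4g-4$; so $\gamma$ must have exactly $2g-2$ transverse crossings with the $1$-skeleton of $\Gamma$, with one arc inside each octagon joining two \emph{opposite} boundary sides (so that each octagon splits into two hexagons rather than a pentagon and a heptagon).

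To construct $\gamma$, I would first choose in each octagon one of its four pairs of opposite boundary edges, subject to the consistency constraint that every edge $e$ of $\Gamma$ is either chosen by both of its two incident octagons or by neither. Drawing a chord in each octagon between the midpoints of the chosen edges produces a family of arcs that automatically glue at these midpoints into a disjoint union of simple closed curves on $S_g$. If this union has more than one component, I would merge components two at a time by a local surgery: locate an octagon bordering two different components and re-pair its opposite sides, rerouting its chord so that the two components fuse. Iterating this local move yields a single closed curve $\gamma$, and $\Delta := \Gamma \cup \{\gamma\}$ is a hexagonal decomposition by the count above.

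It remains to bound the number of crossings between $\gamma$ and an edge $e$ of $G$. By Theorem~\ref{T:octagons}, $e$ crosses each closed curve of $\Gamma$ at most $O(1)$ times, so $e$ meets the boundary of any fixed octagon $O$ only $O(1)$ times; hence $e \cap O$ consists of at most $O(1)$ properly embedded arcs in the disk $O$. A standard minimal-position argument shows that a single properly embedded arc in a disk can be isotoped (rel boundary) so as to cross each arc of a given finite family of disjoint properly embedded arcs at most once, so the chord of $\gamma$ inside $O$ can be drawn to cross each arc of $e \cap O$ at most once, and hence $e$ itself at most $O(1)$ times within $O$. Summing over the $2g-2$ octagons gives at most $O(g)$ crossings between $\gamma$ and $e$. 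Combined with the $O(g)$ crossings between $e$ and the closed curves of $\Gamma$ (there are $O(g)$ such curves and each contributes $O(1)$), we get the $O(g)$ total bound on crossings of $e$ with $\Delta$.

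The main obstacle is the combinatorial construction of $\gamma$ as a single closed curve: both establishing the existence of a consistent choice of opposite pairs across shared edges, and verifying that the local re-pairing move genuinely reduces the number of components. The merging step requires a lemma asserting that whenever the arrangement has more than one component, there is always an octagon at which a local re-selection of the opposite pair strictly reduces the component count; proving this cleanly will likely rely on structural features of the octagonal decomposition produced by Theorem~\ref{T:octagons}, such as connectivity of its dual graph.
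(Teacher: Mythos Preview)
Your overall strategy---start from the octagonal decomposition $\Gamma$ of Theorem~\ref{T:octagons} and add a single closed curve $\gamma$ that cuts each octagon into two hexagons---is exactly what the paper does. The difference lies in how $\gamma$ is built, and the paper's choice is precisely what dissolves the obstacle you flag in your last paragraph.

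Rather than threading chords through the interiors of the octagons and then worrying about consistency of opposite-side selections and merging of components, the paper lets $\gamma$ run \emph{parallel} to the existing curves: $\gamma$ is chosen to follow closely a concatenation of $O(g)$ subpaths of curves already in $\Gamma$ (this is what Figure~\ref{F:octagons}.c is meant to convey). Inside each octagon the arc of $\gamma$ hugs three consecutive boundary edges, so it still enters and exits through opposite sides and cuts the octagon into two hexagons; but now there is no consistency or connectivity problem to solve, because $\gamma$ is specified globally as a single push-off of an explicit walk in the $1$-skeleton of $\Gamma$. The crossing bound is then immediate as well: every crossing of an edge $e$ of $G$ with $\gamma$ sits next to a crossing of $e$ with the subpath of $\Gamma$ that $\gamma$ is shadowing, so the $O(1)$ crossings of $e$ with each curve of $\Gamma$, summed over the $O(g)$ subpaths, give the $O(g)$ bound for $\gamma$ without any per-octagon isotopy argument.

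Your chord-and-merge construction could plausibly be completed, but as you yourself note it leaves genuine lemmas unproved (existence of a consistent opposite-pair selection across shared edges; that local re-pairing strictly decreases the component count). The paper's ``follow the boundary closely'' trick is exactly the device that makes those issues evaporate, and it is worth internalising: when you need to add a curve with controlled crossings relative to an embedded graph, routing it parallel to curves that already have controlled crossings is usually the cheapest move.
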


%

\begin{proof}
The decomposition $\Delta$ is simply obtained by taking the decomposition $\Gamma$ and adding a single curve that follows closely a concatenation of $O(g)$ subpaths of curves of $\Gamma$, see Figure~\ref{F:octagons}c. The resulting arrangement of curves has the topology of a hexagonal decomposition, and the bounds on the number of crossings results directly from the construction.
\end{proof}

\changed{\textbf{Remark:} We remark that \'E. Colin de Verdi\`ere and Erickson~\cite{ce-tnpcs-10} actually build a decomposition with hexagonal faces before discarding cycles to get an octagonal decomposition. However, their decomposition is not homeomorphic to our notion of hexagonal decomposition, and would be less adapted for the proof of Theorem~\ref{T:concat} since some of their hexagonal faces are glued to themselves (in particular Proposition~\ref{P:convexity} would not hold).}




\subparagraph*{The hyperbolic metric.} We first endow each hexagon of the hexagonal decomposition with the
hyperbolic metric $m_H$ of an equilateral right-angled hyperbolic
hexagon. Since the hexagons have right angles and the vertices of a
hexagonal decomposition have degree $4$, this metric can be safely
pasted between hexagons to endow $S_g$ with a hyperbolic metric
$m$. The main property of this metric that we will use is the
following one:


\begin{proposition}\label{P:convexity}
Every hexagon $H$, viewed as a subset of $S_g$ endowed with $m$, is
convex, i.e., every path between $x,y \in H$ that is a shortest path
in $H$ is also a shortest path in $S_g$.
\end{proposition}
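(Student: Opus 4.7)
The plan is to pass to the universal cover $\pi : \tilde S_g \to S_g$, which since $m$ has constant curvature $-1$ (the right-angle condition at every vertex of the hexagonal decomposition guarantees no cone singularities) is isometric to $\mathbb{H}^2$. The hexagonal decomposition lifts to a tiling of $\mathbb{H}^2$ by equilateral right-angled hyperbolic hexagons with four meeting at each vertex which, being simply connected with the same local combinatorial and geometric structure, must globally coincide with the regular $\{6,4\}$ tiling of $\mathbb{H}^2$. Fix a lift $\tilde H$ of $H$ and the unique lifts $\tilde x, \tilde y \in \tilde H$ of $x, y \in H$.

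The first step is to observe that $\tilde H$ is convex in $\mathbb{H}^2$: its six sides are hyperbolic geodesic arcs and its interior angles are all $\pi/2 < \pi$, so the hyperbolic geodesic segment between any two points of $\tilde H$ stays inside $\tilde H$. Consequently, any shortest path in $H$ from $x$ to $y$ lifts to this geodesic segment and has length $d_{\mathbb{H}^2}(\tilde x, \tilde y)$. Since $d_{S_g}(x, y) = \min_{\gamma \in \Gamma} d_{\mathbb{H}^2}(\tilde x, \gamma \tilde y)$, where $\Gamma$ is the deck group, it remains to show that this minimum is attained at $\gamma = \mathrm{id}$.

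For this I would use an unfolding argument. Let $\mathcal{G}$ be the full symmetry group of the $\{6,4\}$ tiling, which contains $\Gamma$ and is generated by reflections across edges; the orbit of $\tilde y$ under $\mathcal{G}$ is a discrete set with exactly one point in each tile. Given any $\tilde y' = R(\tilde y)$ with $R \in \mathcal{G}$, $R \ne \mathrm{id}$, the hyperbolic geodesic from $\tilde x$ to $\tilde y'$ exits $\tilde H$ through some edge $e$ at a point $p$. Let $R_e \in \mathcal{G}$ be the reflection across $e$; replacing the portion of the geodesic past $p$ by its reflection across $e$ produces a broken path from $\tilde x$ to $R_e(\tilde y')$ of the same length, and straightening this broken path yields a strictly shorter geodesic. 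Iterating, the distances to successive orbit points strictly decrease, so by discreteness the process must terminate, necessarily at the unique orbit point inside $\tilde H$, namely $\tilde y$. This yields $d_{\mathbb{H}^2}(\tilde x, \tilde y) \leq d_{\mathbb{H}^2}(\tilde x, R(\tilde y))$ for all $R \in \mathcal{G}$, and in particular for all $\gamma \in \Gamma \subset \mathcal{G}$, completing the proof.

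The main obstacle I anticipate is justifying the identification of the lifted tiling with the regular $\{6,4\}$ tiling, so that its full symmetry group is generated by edge reflections and $\Gamma$ is a subgroup of it. Once this identification is in place, the iterative unfolding is standard, relying only on the discreteness of $\mathcal{G}$-orbits (which holds because $\mathcal{G}$ acts properly discontinuously on $\mathbb{H}^2$) and the strict convexity of the hyperbolic distance function that forces each straightening step to give a strict decrease.
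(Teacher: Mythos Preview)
Your proposal has a genuine gap, and the obstacle you anticipate is not quite the right one. The identification of the lifted tiling with the regular $\{6,4\}$ tiling is actually unproblematic: a simply connected hyperbolic surface tiled by equilateral right-angled hexagons with four at each vertex is automatically the $\{6,4\}$ tiling. The real issue is your assertion that the full symmetry group of this tiling is generated by reflections across edges. It is not: the group $\mathcal{G}'$ generated by edge-reflections is the right-angled Coxeter group whose fundamental domain is the hexagon $\tilde H$ itself, whereas the full symmetry group $\mathcal{G}_{\mathrm{full}}$ has a $(2,4,6)$-triangle as fundamental domain, and in fact $\mathcal{G}_{\mathrm{full}} \cong \mathcal{G}' \rtimes D_6$ with $D_6$ the stabiliser of $\tilde H$. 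The deck group $\Gamma$ lies in $\mathcal{G}_{\mathrm{full}}$, but there is no a~priori reason for $\Gamma \subset \mathcal{G}'$; whether this holds depends on the $D_6$-valued holonomy of the particular gluing pattern. Consequently your unfolding of $\gamma(\tilde y)$ by edge-reflections terminates not at $\tilde y$ but at the unique $\mathcal{G}'$-orbit representative of $\gamma(\tilde y)$ in $\tilde H$, namely $\rho(\tilde y)$ where $\rho \in D_6$ is the image of $\gamma$ under $\mathcal{G}_{\mathrm{full}} \to D_6$. You obtain $d(\tilde x,\rho(\tilde y)) \le d(\tilde x,\gamma(\tilde y))$, which does not yield $d(\tilde x,\tilde y) \le d(\tilde x,\gamma(\tilde y))$.

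That this is a genuine failure and not merely a missing justification is visible already in the Euclidean analogue. Take the square flat torus tiled by one unit square: $\mathcal{G}'$ is generated by reflections in the integer grid lines, and the deck group $\mathbb{Z}^2$ is \emph{not} contained in it (reflections in $x=0$ and $x=1$ compose to translation by $2$, not $1$). With $\tilde x=(0.1,0.5)$, $\tilde y=(0.9,0.5)$ and $\gamma$ the translation by $(-1,0)$, your unfolding folds the length-$0.2$ segment $[\tilde x,\gamma(\tilde y)]$ back to the point $(0.1,0.5)\ne\tilde y$, consistent with the fact that the unit square is not convex in the torus.

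The paper's proof is also a folding/exchange argument, but it works directly on $S_g$ and exploits structure particular to this hexagonal decomposition: two explicit isometric involutions of $(S_g,m)$, the hyperelliptic involution $\sigma_1$ and an involution $\sigma_2$ swapping the two hexagons inside each octagon of the underlying octagonal decomposition. Applying $\sigma_1$ and then $\sigma_2$ to the portions of a shortest path outside the starting hexagon confines its induced walk in the hexagon-adjacency graph to a linear subgraph; since a shortest path induces a non-backtracking walk, this walk is then trivial. These global isometries of $S_g$ are precisely what guarantee that the folded endpoint is $y$ itself rather than an arbitrary $D_6$-translate, and they are available only because of the specific combinatorics of the decomposition in Figure~\ref{F:octagons}.
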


\begin{proof}
\changed{  We will prove that for any two points $x,y \in H$, there exists a
  shortest path (in $S_g$) that is entirely contained in $H$. The
  proof relies on an exchange argument based on the symmetries of the
  hexagonal decomposition. We rely on two involutions which we first
  define on the intersection graph\footnote{The intersection graph of the hexagonal decomposition is defined by taking one vertex for
    each hexagon and edges between adjacent hexagons (we allow
    multiple edges).} of the hexagonal decomposition. This graph should
  be thought of as embedded on the surface of a $g \times 1 \times 1$
  rectangular block (see Figure~\ref{F:intersections}), and
}
\begin{figure}
  \centering
  \def\svgwidth{11cm}
\begingroup%
  \makeatletter%
  \providecommand\color[2][]{%
    \errmessage{(Inkscape) Color is used for the text in Inkscape, but the package 'color.sty' is not loaded}%
    \renewcommand\color[2][]{}%
  }%
  \providecommand\transparent[1]{%
    \errmessage{(Inkscape) Transparency is used (non-zero) for the text in Inkscape, but the package 'transparent.sty' is not loaded}%
    \renewcommand\transparent[1]{}%
  }%
  \providecommand\rotatebox[2]{#2}%
  \ifx\svgwidth\undefined%
    \setlength{\unitlength}{3565.51473278bp}%
    \ifx\svgscale\undefined%
      \relax%
    \else%
      \setlength{\unitlength}{\unitlength * \real{\svgscale}}%
    \fi%
  \else%
    \setlength{\unitlength}{\svgwidth}%
  \fi%
  \global\let\svgwidth\undefined%
  \global\let\svgscale\undefined%
  \makeatother%
  \begin{picture}(1,0.22832924)%
    \put(0,0){\includegraphics[width=\unitlength,page=1]{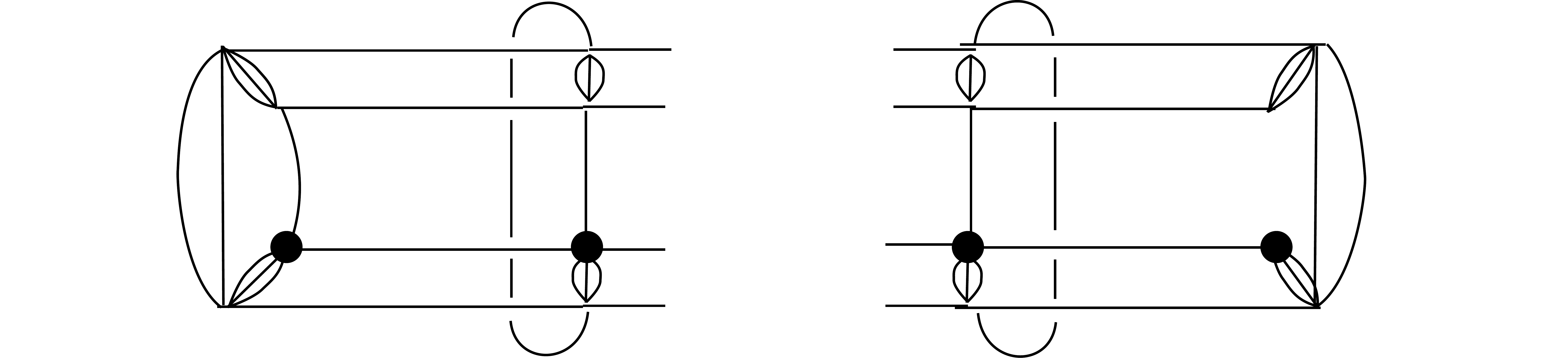}}%
    \put(0.47824722,0.17299713){\color[rgb]{0,0,0}\makebox(0,0)[lb]{\smash{$\ldots$}}}%
    \put(0.47824722,0.04536604){\color[rgb]{0,0,0}\makebox(0,0)[lb]{\smash{$\ldots$}}}%
    \put(0,0){\includegraphics[width=\unitlength,page=2]{Intersection.pdf}}%
    \put(0.9664435,0.11376957){\color[rgb]{0,0,0}\makebox(0,0)[b]{\smash{$\sigma_1$}}}%
    \put(0,0){\includegraphics[width=\unitlength,page=3]{Intersection.pdf}}%
    \put(0.02514539,0.1736382){\color[rgb]{0,0,0}\makebox(0,0)[b]{\smash{$\sigma_2$}}}%
    \put(0.02706857,0.0424907){\color[rgb]{0,0,0}\makebox(0,0)[b]{\smash{$\sigma_2$}}}%
    \put(0,0){\includegraphics[width=\unitlength,page=4]{Intersection.pdf}}%
  \end{picture}%
\endgroup%
  \caption{The intersection graph of the hexagonal decomposition and the two involutions: $\sigma_1$ is the reflection across the horizontal plane pictured in gray, and $\sigma_2$ swaps every pair of triply-linked adjacent vertices (pictured by disks and stars).}
  \label{F:intersections}
\end{figure}

\begin{itemize}
  
 \item \changed{The map $\sigma_1$ is the reflection across the mid-plane of the top and bottom facets, sending vertices above that plane to the corresponding ones below.}
 \item \changed{The map $\sigma_2$ is the reflection accross the mid-plane of the front and back facets. Equivalently, it swaps each hexagon with its neighbor in the original octagonal decomposition.
   }
\end{itemize}
\changed{
Since all the hexagons are isometric, these two maps induce isometric involutions of the surface $S_g$ endowed with $m$, which we also denote by $\sigma_1$ and $\sigma_2$. They allow us to cut $S_g$ into four quadrants, each of them being a linear concatenation of hyperbolic hexagons: the first one, which we denote by $Q_1$ is pictured in Figure~\ref{F:hexagons}, and we obtain $Q_2$, $Q_3$ and $Q_4$ by applying respectively $\sigma_1$, $\sigma_2$ and $\sigma_1 \sigma_2$ to $Q_1$.
}
\begin{figure}
  \centering
  \def\svgwidth{16cm}

  \begingroup%
  \makeatletter%
  \providecommand\color[2][]{%
    \errmessage{(Inkscape) Color is used for the text in Inkscape, but the package 'color.sty' is not loaded}%
    \renewcommand\color[2][]{}%
  }%
  \providecommand\transparent[1]{%
    \errmessage{(Inkscape) Transparency is used (non-zero) for the text in Inkscape, but the package 'transparent.sty' is not loaded}%
    \renewcommand\transparent[1]{}%
  }%
  \providecommand\rotatebox[2]{#2}%
  \ifx\svgwidth\undefined%
    \setlength{\unitlength}{4487.02836459bp}%
    \ifx\svgscale\undefined%
      \relax%
    \else%
      \setlength{\unitlength}{\unitlength * \real{\svgscale}}%
    \fi%
  \else%
    \setlength{\unitlength}{\svgwidth}%
  \fi%
  \global\let\svgwidth\undefined%
  \global\let\svgscale\undefined%
  \makeatother%
  \begin{picture}(1,0.23356569)%
    \put(0,0){\includegraphics[width=\unitlength,page=1]{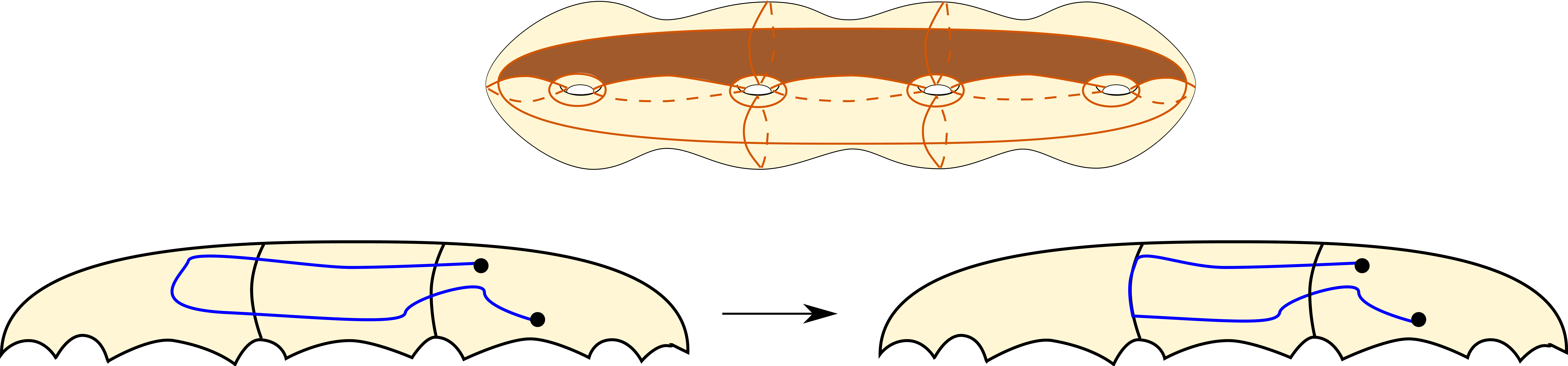}}%
    \put(0.32207872,0.0587886){\color[rgb]{0,0,0}\makebox(0,0)[lb]{\smash{$x$}}}%
    \put(0.35722768,0.02414907){\color[rgb]{0,0,0}\makebox(0,0)[lb]{\smash{$y$}}}%
    \put(0.8835921,0.05803199){\color[rgb]{0,0,0}\makebox(0,0)[lb]{\smash{$x$}}}%
    \put(0.92266715,0.02720549){\color[rgb]{0,0,0}\makebox(0,0)[lb]{\smash{$y$}}}%
  \end{picture}%
\endgroup%
  \caption{Top: The surface is cut into four quadrants, and $Q_1$ is pictured here with a darker color. Bottom: If $\gamma''$ enters another hexagon than $H$, it can be shortcut.}
  \label{F:hexagons}
\end{figure}

\changed{Now, let $x$ and $y$ be two points in a hexagon $H$, which we assume without loss of generality (by applying $\sigma_1$ and/or $\sigma_2$) to be in $Q_1$, and let $\gamma$ be a shortest path between $x$ and $y$. This path $\gamma$ may wander out of $H$, but in this case we show that there is another shortest path between $x$ and $y$ that is entirely contained within $H$. For every maximal subpath $\alpha$ of $\gamma$ in the interior of $Q_3 \cup Q_4$, we reflect $\alpha$ into $Q_1 \cup Q_2$ using $\sigma_2$. This results in a new path $\gamma'$ between $x$ and $y$, with the same length as $\gamma$ (since $\sigma_1$ is an isometry) that is entirely contained in $Q_1 \cup Q_2$. Then, for every maximal subpath $\alpha'$ of $\gamma'$ in the interior of $Q_2$, we reflect $\alpha'$ into $Q_1$ by applying $\sigma_1$, which yields a path $\gamma''$ between $x$ and $y$ entirely contained within $Q_1$. Since $\gamma''$ has the same length as $\gamma$, it is also a shortest path.}

\changed{At this stage, we claim that $\gamma''$ is actually contained in $H$. Indeed, if it were not, since $Q_1$ is a linear concatenation of hexagons, there would be another hexagon $H'$ that contains a subpath $\alpha''$ of $\gamma''$ such that the two endpoints $e_1$ and $e_2$ of $\alpha''$ lie on the same side $s$ of $H'$, and $\alpha'' \not \subset s$. But then $\gamma''$ cannot be a shortest path, since this subpath $\alpha''$ could be shortcut by following the side $s$ between $e_1$ and $e_2$ instead of entering $H'$, see Figure~\ref{F:hexagons}.}

\changed{Thus we have found a shortest path between $x$ and $y$ that is entirely contained within $H$, which proves the proposition.
}
\end{proof}

\subparagraph*{Finishing the proof.} We prove in this paragraph how to reembed a graph embedded in a hexagon so that its edges are shortest paths. This allows us to finish the proof.

\begin{restatable}{theorem}{TYves}\label{T:Yvesvariant}
Let $G$ be a graph embedded as a triangulation in a hyperbolic hexagon
$H$ endowed with the metric $m_H$. If there are no dividing edges in
$G$, i.e., edges between two non-adjacent vertices on the boundary of
$H$, then $G$ can be embedded with geodesics, with the vertices on the
boundary of $H$ in the same positions as in the initial embedding.
\end{restatable}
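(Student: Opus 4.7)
The plan is to adapt Y.~Colin de Verdi\`ere's variational proof of his barycentric embedding theorem~\cite{c-crgts-91} to a compact hyperbolic hexagon with Dirichlet boundary conditions. Let $V_{\mathrm{int}}$ denote the set of interior vertices of $G$, and let $V_{\partial}$ denote the set of boundary vertices, whose images are prescribed on $\partial H$. On the configuration space $H^{V_{\mathrm{int}}}$ I would define the spring energy
\[
E(p) = \sum_{uv \in E(G)} d_{m_H}(p_u, p_v)^2,
\]
treating the positions of vertices in $V_{\partial}$ as fixed constants.

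First, I would establish existence, uniqueness and the barycenter characterization of a minimizer. Since $H$ sits isometrically as a convex subset of the hyperbolic plane by Proposition~\ref{P:convexity}, and the hyperbolic plane is $\mathrm{CAT}(0)$, the squared distance function is strictly convex along geodesics. Hence $E$ is strictly convex and continuous on the compact set $H^{V_{\mathrm{int}}}$ and attains a unique minimum at some configuration $p^*$. Taking the first variation at $p^*$ yields the hyperbolic barycenter condition $\sum_{u \sim v} \mathrm{Log}_{p^*_v}(p^*_u) = 0$ at every interior vertex $v$, where $\mathrm{Log}_x(y)$ denotes the initial tangent vector of the geodesic from $x$ to $y$, scaled by $d_{m_H}(x,y)$.

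Second, I would show that drawing each edge of $G$ as the geodesic between the images of its endpoints under $p^*$ yields a valid embedding. Following the Tutte--Colin de Verdi\`ere strategy in nonpositively curved spaces, this reduces to three claims: no interior vertex of $G$ is sent to $\partial H$; the cyclic order of edges around each interior vertex agrees with the combinatorial cyclic order prescribed by $G$; and the open combinatorial faces of $G$ map to pairwise disjoint open regions of $H$. Given the first claim, the remaining two follow from the standard local and global parts of the Tutte argument, carried over verbatim to the $\mathrm{CAT}(0)$ setting using the strict convexity of the squared distance.

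The main obstacle is the first claim, and this is precisely where the no-dividing-edges hypothesis enters. Suppose some $v \in V_{\mathrm{int}}$ satisfied $p^*_v \in s$ for some side $s$ of $\partial H$. Then $H$ lies entirely in the closed half-plane of the hyperbolic plane bounded by the geodesic line $\ell_s$ extending $s$, so all the vectors $\mathrm{Log}_{p^*_v}(p^*_u)$ point into the corresponding closed half of $T_{p^*_v} H$; for their sum to vanish, each one must be tangent to $\ell_s$, forcing every neighbor of $v$ to also lie on $s$. Iterating, the connected component of $v$ in the subgraph of interior vertices mapped to $s$, together with all its boundary neighbors, is collapsed onto the one-dimensional segment $s$. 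Tracing the cycle bounding this collapsed region in the combinatorial embedding of $G$, one finds that its endpoints on $\partial H$ must include vertices lying on sides distinct from $s$, so some edge of $G$ on this boundary cycle must connect two non-adjacent boundary vertices of $H$ --- a dividing edge, contradicting the hypothesis. Once the first claim is established, the remainder of the Colin de Verdi\`ere argument produces the desired geodesic embedding.
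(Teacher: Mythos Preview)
Your variational setup (spring energy, existence of a minimizer, barycenter condition) is essentially the paper's starting point, but the heart of the argument---ruling out degeneracies---is where your sketch and the paper diverge, and where your proposal has genuine gaps.

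First, a minor point: when $p^*_v\in\partial H$, the first variation only gives a KKT inequality, not $\sum_u \mathrm{Log}_{p^*_v}(p^*_u)=0$. The conclusion you want (all neighbours of $v$ land on $s$) still follows by combining that inequality with the half-space constraint, but not from ``the sum vanishes''.

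More seriously, your dividing-edge step is not correct as written. Once every neighbour of every interior vertex mapped to $s$ also lies on $s$, the combinatorial boundary of the closed star of those interior vertices consists \emph{only} of edges between boundary vertices of $G$ lying on the side $s$---not, as you assert, vertices on other sides. The actual dichotomy is: either one of those edges joins two non-consecutive boundary vertices on $s$ (a dividing edge, contradiction), or every such edge is a boundary edge of $G$, in which case the closed star is all of $G$ and every vertex maps to $s$, contradicting the presence of boundary vertices on the other five sides. Your sentence about ``endpoints on $\partial H$ \dots on sides distinct from $s$'' has the logic reversed.

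Finally, you defer interior degeneracies (coincident interior vertices, triangles collapsed onto a geodesic in $\mathrm{int}\,H$) to ``the standard Tutte argument carried over verbatim to the CAT(0) setting''. Strict convexity of $E$ gives uniqueness of the minimizer but does not by itself rule out $p^*_v=p^*_w$; and the usual planar maximum-principle argument across a separating line uses that boundary vertices lie strictly on both sides, which is exactly what fails for a weakly convex hexagon. The paper does not rely on a maximum principle here at all: it runs a Gauss--Bonnet angle-defect count, showing $\sum_v K(v)\ge 0$ from hyperbolic area and $\sum_v K(v)\le 0$ from the barycenter condition, and uses the resulting equality to exclude punctual and then linear degeneracies in one sweep. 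That mechanism is what you are missing, and it is where the dividing-edge hypothesis is actually invoked (Step~4 of the paper's argument, when the degenerate geodesic lies on $\partial H$).
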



Let us postpone the proof of this theorem for now, and show how to conclude the proof of Theorem~\ref{T:concat}. We first show how to upgrade a graph embedded in a disk to a triangulation.

\begin{lemma}\label{L:triangulate}
For any graph $G$ embedded in a disk without
dividing edges, there exists a triangulation $G'$ of the disk \changed{that contains}
$G$ as a subgraph and that does not contain any dividing edges.
\end{lemma}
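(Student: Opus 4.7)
The plan is to first extend $G$ to an arbitrary triangulation $G_0$ of the disk $H$ using only interior Steiner vertices, and then to destroy any remaining dividing edges through a local subdivision that provably cannot create new ones.

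\textbf{Extension step.} First I would add, if missing, edges between consecutive boundary vertices of $H$ along $\partial H$; these connect adjacent boundary vertices and are by definition not dividing. Then I would extend the resulting graph to a triangulation $G_0$ of $H$ by inserting only interior Steiner vertices and their incident edges. The standard way to do this is to treat each face of the graph as a topological disk (after a preliminary insertion of interior Steiner points to split non-simply-connected faces, or faces with non-simple boundary, into simple polygonal pieces) and then star-triangulate it from an interior point. This guarantees that every edge of $G_0\setminus G$ either lies on $\partial H$ between adjacent boundary vertices or is incident to at least one interior Steiner vertex.

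\textbf{Elimination of dividing edges.} By hypothesis, every dividing edge of $G_0$ lies in $G_0\setminus G$. Let $e=uv$ be such an edge and let $uvx$, $uvy$ be the two triangles of $G_0$ incident to $e$. I would insert a new vertex $w$ in the interior of the arc realising $e$; note that $w$ lies in the open interior of $H$ since $u\neq v$ are distinct points of $\partial H$ and $e$ crosses the interior. I then replace $e$ by the two edges $uw$ and $wv$, and add the two diagonals $wx$ and $wy$ to re-triangulate the resulting quadrilaterals $u{-}w{-}v{-}x$ and $u{-}w{-}v{-}y$ into four triangles each. All four newly created edges are incident to the interior vertex $w$ and therefore cannot join two boundary vertices, so none is dividing, while $e$ itself has been destroyed. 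Hence the number of dividing edges strictly decreases, and iterating this move finitely many times yields the desired triangulation $G'$.

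\textbf{Main obstacle.} The principal technical point is the extension step: one must ensure that the intermediate triangulation $G_0$ is obtainable from $G$ without creating dividing edges through the mere act of triangulating. This is exactly what the restriction to star-triangulation from \emph{interior} Steiner points achieves, since then every new edge has an interior endpoint. The elimination step is then essentially automatic, because the subdivision move touches only two triangles locally and, by construction, introduces only edges incident to a freshly added interior vertex, preserving the invariant that the number of dividing edges strictly decreases at each iteration.
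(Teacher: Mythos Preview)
Your proposal is correct and follows essentially the same approach as the paper: star-triangulate each face from a new interior vertex so that every added edge has an interior endpoint and hence cannot be dividing. Note that, by your own argument in the extension step, $G_0$ already has no dividing edges, so your elimination step is never actually invoked; the paper's proof omits it and instead briefly addresses the only genuine side effect of star-triangulation, namely possible multiple edges when a vertex repeats on a face boundary, which it fixes by a further subdivision.
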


\begin{proof}
For every face $F$ of $G$, we start by adding a vertex in $F$ and
edges connecting it to the vertices adjacent to the face. This does
not add loops or dividing edges, but may add multiple edges if one
vertex occurs multiple times on the boundary of a face. These are
taken care of by subdividing them once again and triangulating.
\end{proof}

All the pieces are now in place for the proof of Theorem~\ref{T:concat}.

\begin{proof}[Proof of Theorem~\ref{T:concat}]
By Corollary~\ref{C:hexagons}, one can embed $G$ into $S_g$ such that
every edge of $G$ is cut $O(g)$ times by the hexagonal decomposition
$\Delta$. This defines a graph $G'= \bigcup_i G'_i$ such that each of
the graphs $G'_i$ is embedded in a single hexagon and $G'$ is obtained
from $G$ by subdividing every edge $O(g)$ times. If there are dividing
edges in $G'_i$, they can be removed by subdividing the edge once. By Lemma~\ref{L:triangulate}, one can upgrade all
the $G'_i$ to triangulations. We can then apply Theorem~\ref{T:Yvesvariant}
in each of the hexagons separately, yielding embeddings with shortest
paths.
Since the vertices on the boundary did not move during the
reembedding, this defines an embedding of $G$ into $S_g$. Since $H$ is
simply connected and $m_H$ is hyperbolic, there is a unique geodesic
connecting any two points, and this geodesic is a shortest path.
Therefore the edges of $G'$ are shortest paths in $H$. By
Proposition~\ref{P:convexity}, each edge of $G'$ is also a shortest
path in $S_g$. Therefore each edge of $G$ is embedded as a
concatenation of $O(g)$ shortest paths.
\end{proof}

We note that by subdividing each edge once more, the shortest paths we
obtain are unique.
\bigskip 

The proof of Theorem~\ref{T:Yvesvariant} is obtained in a spirit
similar to the proof of the one of the celebrated \textit{spring
  theorem} of Tutte~\cite{t-hdg-63}. However, there are two main
differences which prevent us from directly appealing to the literature: on the
one hand the metric is not Euclidean but hyperbolic, and on the other
hand the boundary of the input polygon is not \textit{strictly
  convex}, since there may be multiple vertices of $G$ on a geodesic
boundary of $H$. The hypothesis on dividing edges is tailored to
circumvent the second issue, and in a Euclidean setting it was proved
by Floater~\cite{f-otoplm-03} that the \changed{corresponding} embedding theorem
holds. Regarding the first issue, Y. Colin de Verdi\`ere stated a
Tutte embedding theorem~\cite[Theorem~3]{c-crgts-91} for the
hyperbolic setting with strictly convex boundary, yet he actually did not
provide a proof for it.
\ifappendix 
In Appendix~\ref{A:Yves}
\else
In the next paragraph
\fi
we show how to prove Theorem~\ref{T:Yvesvariant}
in the generality that we need following the ideas laid out by Y. Colin de Verdi\`ere in the rest of his article~\cite{c-crgts-91}. This concludes the proof of Theorem~\ref{T:concat}.

\bigskip
%

Finally, we remark that this proof technique provides an alternative proof of Negami's Theorem~\cite{n-cngepc-01} for orientable surfaces. If $G_1$ and $G_2$ are two graphs embedded on the orientable surface of genus $g$, a crude application of Theorem~\ref{T:concat} shows that one can reembed both graphs with a homeomorphism such that each edge is realized as a concatenation of $O(g)$ shortest paths for our hyperbolic metric. Since hyperbolic shortest paths in general position cross at most once, this gives embeddings of $G_1$ and $G_2$ such that there are $O(g^2)$ crossings between each edge of $G_1$ and each edge of $G_2$. Negami proved that $O(g)$ crossings are actually enough, and a deeper look at our construction also achieves this better bound: it is easy to see that in our reembeddings, each edge is actually cut into $O(1)$ subedges realized as shortest paths in each hexagon. Since there are $O(g)$ hexagons, there are in total $O(g)$ crossings between each edge of $G_1$ and $G_2$, which yields the following:

\begin{corollary}
\label{c:negami}  
There exists an absolute constant $c>0$ such that if $S_g$ is
an orientable surface of genus $g$, for any two embedded graphs $G_1, G_2 \to
S_g$ there exists a homeomorphism $h:S_g\rightarrow S_g$ such that
$cr(h(G_1),G_2) \leq c g  |E(G_1)|\cdot |E(G_2)|$\end{corollary}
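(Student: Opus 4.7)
The strategy, already sketched in the paragraph preceding the statement, is to reduce everything to the reembedding provided by Theorem~\ref{T:concat}. I would take the hyperbolic metric $m$ on $S_g$ from that theorem and apply it to each of $G_1$ and $G_2$ to obtain self-homeomorphisms $\phi_1, \phi_2: S_g \to S_g$ for which every edge of $\phi_i(G_i)$ is a concatenation of $O(g)$ $m$-shortest paths. Setting $h := \phi_2^{-1} \circ \phi_1$ and using that $cr(\cdot,\cdot)$ is invariant under the ambient self-homeomorphism $\phi_2^{-1}$, we get $cr(h(G_1),G_2) = cr(\phi_1(G_1), \phi_2(G_2))$, so the task reduces to bounding crossings between the two reembedded graphs.

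The crude bound comes from convexity of the hexagons (Proposition~\ref{P:convexity}): since each hexagon is isometric to a convex subset of $\mathbb{H}^2$, any two distinct geodesic segments inside it cross at most once in general position. As each edge of $\phi_i(G_i)$ consists of $O(g)$ such segments, any pair of edges from the two graphs contributes at most $O(g) \cdot O(g) = O(g^2)$ crossings, yielding the bound $cr(\phi_1(G_1), \phi_2(G_2)) \le O(g^2) |E(G_1)||E(G_2)|$. This already proves a version of the corollary with $g$ replaced by $g^2$.

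To sharpen it to $O(g) |E(G_1)||E(G_2)|$, I would arrange for $\phi_1$ and $\phi_2$ to be produced relative to a \emph{common} hexagonal decomposition $\Delta$ of $S_g$, compatible with both graphs in the sense that each edge of $G_1$ or $G_2$ crosses each curve of $\Delta$ at most $O(1)$ times. Such a $\Delta$ can be obtained by running the octagonal-decomposition theorem of~\cite{ce-tnpcs-10} on an auxiliary graph containing $G_1 \cup G_2$ and then upgrading it via Corollary~\ref{C:hexagons}. With a common $\Delta$, each edge of $\phi_i(G_i)$ has only $O(1)$ geodesic sub-segments in each hexagon; hence each pair of edges contributes $O(1)$ crossings per hexagon, and summing over the $O(g)$ hexagons gives $O(g)$ crossings per edge pair, which is the desired bound.

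The delicate step is this last refinement. In the construction of Corollary~\ref{C:hexagons}, one curve of $\Delta$ is allowed to cross the input graph $O(g)$ times, and the hexagons bordering that curve could then contain $O(g)$ geodesic sub-segments per edge, which would wreck the per-hexagon $O(1)$ bound. Getting around this---either by a more careful placement of the exceptional curve relative to $G_1 \cup G_2$, or by skipping the hexagonal upgrade altogether and extending Theorem~\ref{T:Yvesvariant} to right-angled hyperbolic octagons (so that the octagonal decomposition of~\cite{ce-tnpcs-10} can be used as-is, with a genuinely uniform $O(1)$ crossings-per-curve bound)---is the main technical point that I expect to have to work through.
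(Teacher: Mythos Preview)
Your plan is essentially the paper's own argument: reembed both graphs via Theorem~\ref{T:concat} into the fixed hyperbolic surface, note that distinct shortest paths in a hexagon cross at most once, and then count per hexagon. Two small differences are worth pointing out.

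First, the paper does not build a \emph{common} hexagonal decomposition for $G_1\cup G_2$. It simply applies Theorem~\ref{T:concat} separately to $G_1$ and to $G_2$; both reembeddings land in the same hexagonally tiled metric surface $(S_g,m)$, and it is in \emph{those} fixed hexagons that the counting takes place. So your detour through a joint decomposition is unnecessary, though harmless.

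Second, your worry about the exceptional curve of Corollary~\ref{C:hexagons} is the only place where you diverge from the paper, and it is not actually an obstacle. The extra curve is constructed to run parallel to a concatenation of boundary arcs of the octagonal decomposition~$\Gamma$; inside any single octagon it contributes one arc hugging part of the boundary. Since by Theorem~\ref{T:octagons} each edge of $G$ meets each curve of $\Gamma$ only $O(1)$ times, each edge enters any given octagon only $O(1)$ times, and hence meets the extra curve only $O(1)$ times inside that octagon. Thus the $O(g)$ total crossings with the exceptional curve are distributed $O(1)$ per octagon, and each edge has $O(1)$ shortest-path pieces in every hexagon. The paper records this as ``easy to see'' without spelling it out; your alternative of extending Theorem~\ref{T:Yvesvariant} to right-angled hyperbolic octagons and skipping the hexagonal upgrade would also work and avoids the issue altogether.
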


\ifappendix
\else 
\subparagraph*{Tutte's embedding theorem in a hyperbolic setting.}

Here we explain the proof of the subsequent theorem, following the
arguments of Y. Colin de Verdi\`ere~\cite{c-crgts-91}.

\TYves*

As announced, the proof follows from a spring-like construction,
i.e. we think of the edges of the graph $G$ as springs with some
arbitrary stiffness, the vertices which are not on the boundary are
allowed to move and we prove that the equilibrium state for this
physical system is an embedding of the graph.

For an embedding $\varphi: G \rightarrow H$, denote by
$e_{ij}$ the map $[0,1] \rightarrow H$ representing the edge
$(i,j)$. Starting with an embedding $\varphi_0: G \to H$ and given assignments $c_{i,j} : E(G) \to \mathbb{R}^+$, we are interested in the map $\varphi: G \to
H$ minimizing the energy functional

   \[E_{\varphi}=\sum_{(i,j )\in E} \int_0^1 c_{ij} ||e_{ij}'(t)||^2 dm_H\]
with fixed vertices on the boundary of $H$. This is the equilibrium
state of the spring system with the $c_{i,j}$ coefficients specifying
the stiffness of the springs. We claim that $\varphi$ is an embedding
such that the edges are geodesics.

\textbf{Step 1: Existence.} The existence of $\varphi$ follows from
classical compactness considerations, since an Arzel\`a-Ascoli argument proves the compactness of
sets with bounded energy. Then an extremum of $E_{\varphi}$ corresponds to
a $\varphi$ where all the arcs $e_{i,j}$ are geodesics. Furthermore,
every vertex $\varphi(x)$ which is not on the boundary lies in the
strict hyperbolic convex hull of its neighbors which are not mapped to
the same point.

\textbf{Step 2: Curvature considerations.} Since $\varphi_0$ provides
an embedding of $G$ into $H$, $G$ can be seen as a topological subspace
of $H$. The corresponding simplicial complex will be denoted by $X$
(it is of course homeomorphic to $H$) and its set of vertices, edges and triangles by $V$, $E$, and $T$. By extending $\varphi$ separately with
a local homeomorphism in the interior of each non-degenerate triangle,
we can extend it into a map $\Phi: X \rightarrow H$ agreeing
with $\varphi$ on $G$.

Now, the map $\Phi:X \rightarrow H$ provides values for the angles
of the non-degenerate triangles in $X$. For degenerate
triangles, values of the angles are taken arbitrarily so that they sum
to $\pi$ (therefore morally their hyperbolic area is zero). For an interior vertex
$v$, let us define the curvature $K(v)= 2 \pi - \sum_i \alpha^i_v$, where
$\alpha^i_v$ are the angles adjacent to $v$. For a vertex $v$ on the interior
of a geodesic boundary, we define it by $K(v)= \pi - \sum_i \alpha_v^i$,
and on the six vertices of $H$, we take it to be $K(v)=\pi/2 - \sum_i
\alpha_v^i$.

The area of a geodesic hyperbolic triangle is $\pi$ minus the sum of
its angles. Summing over all the triangles of $\Phi(X)$, we obtain
$|T| \pi - \sum_v \sum_i \alpha^i_v = \sum_{t \in T} Area(t)$. With
Euler's formula and double counting, this gives $\sum_{t \in
  T} Area(t)=\pi + \sum_v K(v)$. Since the boundary is fixed, $\Phi$ has degree one and is thus
surjective, therefore the sum of the areas of the triangles is at least the area of the hexagon, which is $\pi$ since it is
right-angled. Therefore $\sum_v K(v) \geq 0$.

\textbf{Step 3: Punctual degeneracies.} In this step we investigate
which subcomplexes of $X$ can be mapped to a single point. We show
that no triangle can be mapped to a single point, and that a set of
edges mapped to a single point forms a path subgraph in $G$.

Let $X_1$ be a maximal connected subcomplex of $X$ which is mapped to
a point $x$ by $\Phi$. This subcomplex has to be simply connected,
otherwise the region inside could be mapped to $x$ as well which would
reduce the value of $E_{\varphi}$. Since the boundary edges are fixed
by $\varphi$, $X_1$ does not contain any edge on the boundary or
triangle adjacent to the boundary.

For every vertex $v$ in $\Phi^{-1}(x)$, $\Phi(v)=\varphi(v)$ lies in
the strict convex hull of its neighbors which are not mapped to $x$,
as was observed in Step 1. Therefore the angles of the non-degenerate
triangles adjacent to $v$ sum up to at least $2 \pi$. Indeed the
angular opening at $\varphi(v)$ has to be at least $\pi$ by the
convexity hypothesis, but \changed{if a} map $\mathbb{S}^1 \rightarrow
\mathbb{S}^1$ is \changed{not} surjective \changed{then the fiber of any point in the image contains at least two points},
in which case this angular opening of at least $\pi$ amounts to at
least $2 \pi$ in the sum of angles around $v$. This shows that
$K(x):=\sum_{v \in \Phi^{-1}(x)} K(v)$ is nonpositive. Since the
boundary edges are fixed, we also have $K(v)\leq 0$ for the vertices
on the boundary.

Summing over all the values of $x$, we obtain that $\sum_v K(v) \leq
0$, and thus this sum is zero by the previous paragraph, and each of
the $K(x)$ is also zero.

From that we infer that $X_1$ contains no triangle: if it did, there
would be at least $3$ preimages of $x$ for which the angles of the
adjacent non-degenerate triangles would sum up to at least
$2\pi$. Summing them into $K(x)$ we would obtain a nonzero
value. Similarly, $X_1$ can only be a linear subgraph of
$G$, and every triangle adjacent to a $X_1$ not reduced to a point is degenerate.


\textbf{Step 4: Linear degeneracies.} Now that we showed that
triangles can not be mapped to points, we show that triangles are not
mapped to lines either, or equivalently that edges are not mapped to
points.

Let $X_2$ be a maximal connected subcomplex of $X$ such that the image
of the triangles of $X_2$ by $\varphi$ are degenerate. Let us assume
that $X_2$ is non-empty. Then the image $\Phi(X_2)$ is an arc of a
geodesic of $H$: indeed if there was a broken line in $\Phi(X_2)$,
around the breaking points there would be non-degenerate triangles
adjacent to a $X_1$ not reduced to a point, which is absurd by the
previous paragraph.


If this geodesic is not a boundary geodesic of $H$, two of the points
on the boundary of $X_2$ are mapped to the endpoints of the arc of
geodesic, and all the other vertices have their adjacent edges within
$X_2$ because of the convexity condition. Therefore, there must be two
arcs connecting the two boundary points, as in the top of
Figure~\ref{F:appendix}, which is impossible in the simplicial complex
$X$.

If this geodesic is on the boundary of $H$, then by the same convexity
argument, two vertices of $\partial X$ must map to the endpoints of
this arc of geodesic, and the other vertices have all their edges
within $X_2$. Therefore there is a dividing edge connecting these two
vertices, as in the bottom of Figure~\ref{F:appendix}, which is a contradiction.

\begin{figure}
  \centering
  \def\svgwidth{11cm}
  
	\begingroup%
  \makeatletter%
  \providecommand\color[2][]{%
    \errmessage{(Inkscape) Color is used for the text in Inkscape, but the package 'color.sty' is not loaded}%
    \renewcommand\color[2][]{}%
  }%
  \providecommand\transparent[1]{%
    \errmessage{(Inkscape) Transparency is used (non-zero) for the text in Inkscape, but the package 'transparent.sty' is not loaded}%
    \renewcommand\transparent[1]{}%
  }%
  \providecommand\rotatebox[2]{#2}%
  \ifx\svgwidth\undefined%
    \setlength{\unitlength}{2882.13199368bp}%
    \ifx\svgscale\undefined%
      \relax%
    \else%
      \setlength{\unitlength}{\unitlength * \real{\svgscale}}%
    \fi%
  \else%
    \setlength{\unitlength}{\svgwidth}%
  \fi%
  \global\let\svgwidth\undefined%
  \global\let\svgscale\undefined%
  \makeatother%
  \begin{picture}(1,0.50644432)%
    \put(0,0){\includegraphics[width=\unitlength,page=1]{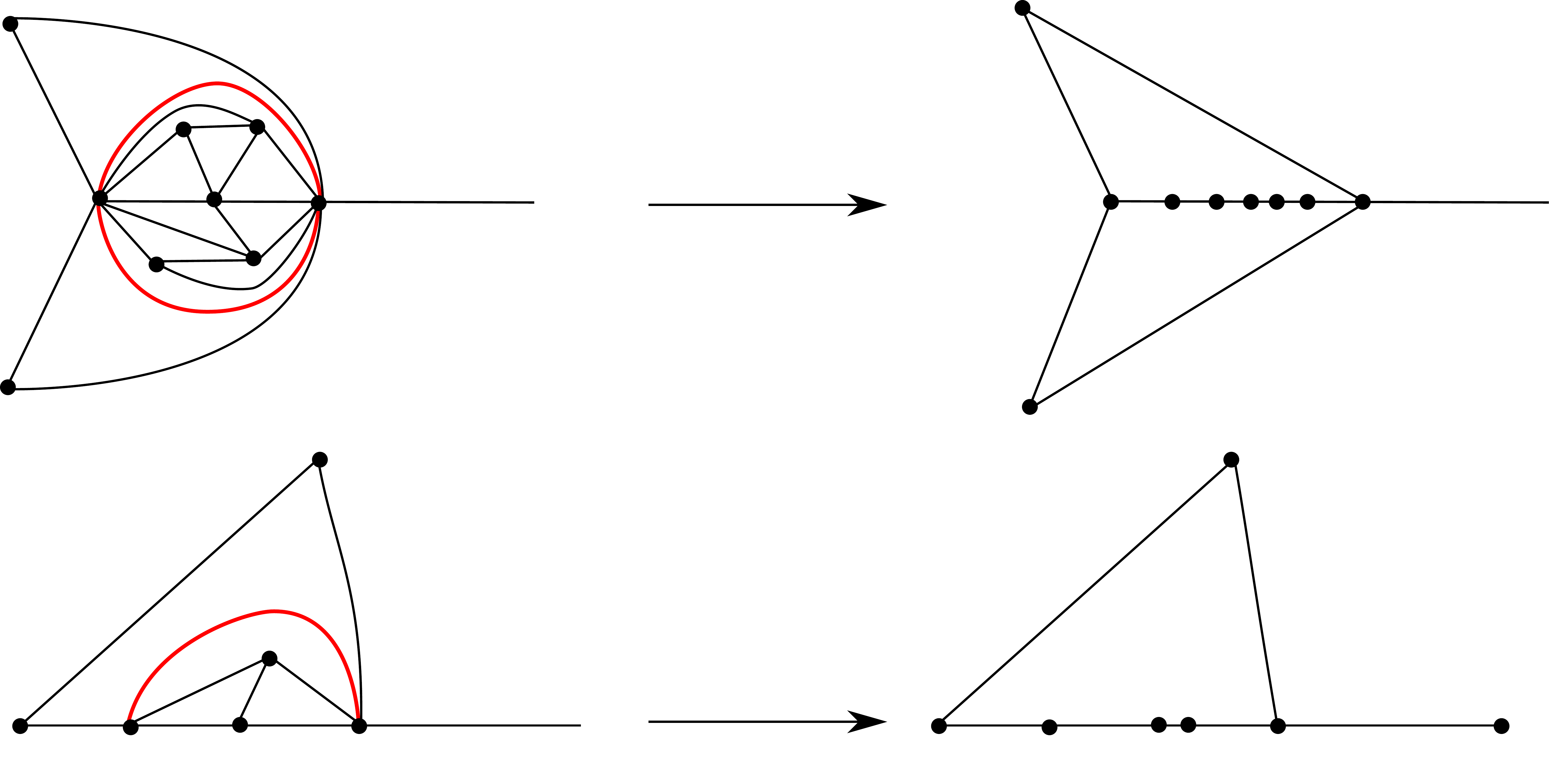}}%
    \put(0.28941283,0.00230949){\color[rgb]{0,0,0}\makebox(0,0)[lb]{\smash{$\partial X$}}}%
    \put(0.9167262,0.00230949){\color[rgb]{0,0,0}\makebox(0,0)[lb]{\smash{$\partial H$}}}%
    \put(0.46031806,0.34233554){\color[rgb]{0,0,0}\makebox(0,0)[lb]{\smash{$\varphi$}}}%
    \put(0.46031806,0.00230949){\color[rgb]{0,0,0}\makebox(0,0)[lb]{\smash{$\varphi$}}}%
  \end{picture}%
\endgroup%
	
  \caption{Any triangulation inducing a linear degeneracy would require either multiple edges (top) or a dividing edge (bottom).}
  \label{F:appendix}
\end{figure}


\textbf{Step 5: Conclusion} Since $X_2$ is empty, no triangle in the
image of $\Phi$ is degenerate. Furthermore, all the $X_1$
are reduced to a single point and thus $K(v)$ is zero for all the
vertices $v$. The only remaining possible pathology is if all the
triangles adjacent to a non-boundary vertex $v$ are mapped to a
half-plane around $\Phi(v)$. By the convexity constraint, this can
only happen if the edges adjacent to $v$ are aligned, but this would
yield degenerate triangles. Therefore $\Phi$ is a local homeomorphism of degree $1$, hence it is a global homeomorphism and $\varphi$ is an embedding.
\fi 

\subparagraph*{Acknowledgements} We are grateful to \'Eric Colin de Verdi\`ere
for his involvement in the early stages of this research. We also thank Sergio
Cabello, Francis Lazarus, Bojan Mohar, Eric Sedgwick, Uli Wagner and the
anonymous referees for helpful remarks, and Xavier Goaoc for organizing the
workshop that led to this work.

\bibliographystyle{plain}

\ifappendix
\appendix

\section{Tutte's embedding theorem in a hyperbolic setting}\label{A:Yves}


In this section, we explain the proof of the subsequent theorem, following the
arguments of Y. Colin de Verdi\`ere~\cite{c-crgts-91}.

\TYves*

As announced, the proof follows from a spring-like construction,
i.e. we think of the edges of the graph $G$ as springs with some
arbitrary stiffness, the vertices which are not on the boundary are
allowed to move and we prove that the equilibrium state for this
physical system is an embedding of the graph.

For an embedding $\varphi: G \rightarrow H$, denote by
$e_{ij}$ the map $[0,1] \rightarrow H$ representing the edge
$(i,j)$. Starting with an embedding $\varphi_0: G \to H$ and given assignments $c_{i,j} : E(G) \to \mathbb{R}^+$, we are interested in the map $\varphi: G \to
H$ minimizing the energy functional

   \[E_{\varphi}=\sum_{(i,j )\in E} \int_0^1 c_{ij} ||e_{ij}'(t)||^2 dm_H\]
with fixed vertices on the boundary of $H$. This is the equilibrium
state of the spring system with the $c_{i,j}$ coefficients specifying
the stiffness of the springs. We claim that $\varphi$ is an embedding
such that the edges are geodesics.

\textbf{Step 1: Existence.} The existence of $\varphi$ follows from
classical compactness considerations, since an Arzel\`a--Ascoli argument proves the compactness of
sets with bounded energy. Then an extremum of $E_{\varphi}$ corresponds to
a $\varphi$ where all the arcs $e_{i,j}$ are geodesics. Furthermore,
every vertex $\varphi(x)$ which is not on the boundary lies in the
strict hyperbolic convex hull of its neighbors which are not mapped to
the same point.

\textbf{Step 2: Curvature considerations.} Since $\varphi_0$ provides
an embedding of $G$ into $H$, $G$ can be seen as a topological subspace
of $H$. The corresponding simplicial complex will be denoted by $X$
(it is of course homeomorphic to $H$) and its set of vertices, edges and triangles by $V$, $E$, and $T$. By extending $\varphi$ separately with
a local homeomorphism in the interior of each non-degenerate triangle,
we can extend it into a map $\Phi: X \rightarrow H$ agreeing
with $\varphi$ on $G$.

Now, the map $\Phi:X \rightarrow H$ provides values for the angles
of the non-degenerate triangles in $X$. For degenerate
triangles, values of the angles are taken arbitrarily so that they sum
to $\pi$ (therefore morally their hyperbolic area is zero). For an interior vertex
$v$, let us define the curvature $K(v)= 2 \pi - \sum_i \alpha^i_v$, where
$\alpha^i_v$ are the angles adjacent to $v$. For a vertex $v$ on the interior
of a geodesic boundary, we define it by $K(v)= \pi - \sum_i \alpha_v^i$,
and on the six vertices of $H$, we take it to be $K(v)=\pi/2 - \sum_i
\alpha_v^i$.

The area of a geodesic hyperbolic triangle is $\pi$ minus the sum of
its angles. Summing over all the triangles of $\Phi(X)$, we obtain
$|T| \pi - \sum_v \sum_i \alpha^i_v = \sum_{t \in T} Area(t)$. With
Euler's formula and double counting, this gives $\sum_{t \in
  T} Area(t)=\pi + \sum_v K(v)$. Since the boundary is fixed, $\Phi$ has degree one and is thus
surjective, therefore the sum of the areas of the triangles is at least the area of the hexagon, which is $\pi$ since it is
right-angled. Therefore $\sum_v K(v) \geq 0$.

\textbf{Step 3: Punctual degeneracies.} In this step we investigate
which subcomplexes of $X$ can be mapped to a single point. We show
that no triangle can be mapped to a single point, and that a set of
edges mapped to a single point forms a path subgraph in $G$.

Let $X_1$ be a maximal connected subcomplex of $X$ which is mapped to
a point $x$ by $\Phi$. This subcomplex has to be simply connected,
otherwise the region inside could be mapped to $x$ as well which would
reduce the value of $E_{\varphi}$. Since the boundary edges are fixed
by $\varphi$, $X_1$ does not contain any edge on the boundary or
triangle adjacent to the boundary.

For every vertex $v$ in $\Phi^{-1}(x)$, $\Phi(v)=\varphi(v)$ lies in
the strict convex hull of its neighbors which are not mapped to $x$,
as was observed in Step 1. Therefore the angles of the non-degenerate
triangles adjacent to $v$ sum up to at least $2 \pi$. Indeed the
angular opening at $\varphi(v)$ has to be at least $\pi$ by the
convexity hypothesis, but \changed{if a map $\mathbb{S}^1 \rightarrow
\mathbb{S}^1$ is not surjective then every point in the image has at least two pre-images},
in which case this angular opening of at least $\pi$ amounts to at
least $2 \pi$ in the sum of angles around $v$. This shows that
$K(x):=\sum_{v \in \Phi^{-1}(x)} K(v)$ is nonpositive. Since the
boundary edges are fixed, we also have $K(v)\leq 0$ for the vertices
on the boundary.

Summing over all the values of $x$, we obtain that $\sum_v K(v) \leq
0$, and thus this sum is zero by the previous paragraph, and each of
the $K(x)$ is also zero.

From that we infer that $X_1$ contains no triangle: if it did, there
would be at least $3$ preimages of $x$ for which the angles of the
adjacent non-degenerate triangles would sum up to at least
$2\pi$. Summing them into $K(x)$ we would obtain a nonzero
value. Similarly, $X_1$ can only be a linear subgraph of
$G$, and every triangle adjacent to a $X_1$ not reduced to a point is degenerate.


\textbf{Step 4: Linear degeneracies.} Now that we showed that
triangles can not be mapped to points, we show that triangles are not
mapped to lines either, or equivalently that edges are not mapped to
points.

Let $X_2$ be a maximal connected subcomplex of $X$ such that the image
of the triangles of $X_2$ by $\varphi$ are degenerate. Let us assume
that $X_2$ is non-empty. Then the image $\Phi(X_2)$ is an arc of a
geodesic of $H$: indeed if there was a broken line in $\Phi(X_2)$,
around the breaking points there would be non-degenerate triangles
adjacent to a $X_1$ not reduced to a point, which is absurd by the
previous paragraph.


If this geodesic is not a boundary geodesic of $H$, two of the points
on the boundary of $X_2$ are mapped to the endpoints of the arc of
geodesic, and all the other vertices have their adjacent edges within
$X_2$ because of the convexity condition. Therefore, there must be two
arcs connecting the two boundary points, as in the top of
Figure~\ref{F:appendix}, which is impossible in the simplicial complex
$X$.

If this geodesic is on the boundary of $H$, then by the same convexity
argument, two vertices of $\partial X$ must map to the endpoints of
this arc of geodesic, and the other vertices have all their edges
within $X_2$. Therefore there is a dividing edge connecting these two
vertices, as in the bottom of Figure~\ref{F:appendix}, which is a contradiction.

\begin{figure}
  \centering
  \def\svgwidth{11cm}
  
	\begingroup%
  \makeatletter%
  \providecommand\color[2][]{%
    \errmessage{(Inkscape) Color is used for the text in Inkscape, but the package 'color.sty' is not loaded}%
    \renewcommand\color[2][]{}%
  }%
  \providecommand\transparent[1]{%
    \errmessage{(Inkscape) Transparency is used (non-zero) for the text in Inkscape, but the package 'transparent.sty' is not loaded}%
    \renewcommand\transparent[1]{}%
  }%
  \providecommand\rotatebox[2]{#2}%
  \ifx\svgwidth\undefined%
    \setlength{\unitlength}{2882.13199368bp}%
    \ifx\svgscale\undefined%
      \relax%
    \else%
      \setlength{\unitlength}{\unitlength * \real{\svgscale}}%
    \fi%
  \else%
    \setlength{\unitlength}{\svgwidth}%
  \fi%
  \global\let\svgwidth\undefined%
  \global\let\svgscale\undefined%
  \makeatother%
  \begin{picture}(1,0.50644432)%
    \put(0,0){\includegraphics[width=\unitlength,page=1]{appendix.pdf}}%
    \put(0.28941283,0.00230949){\color[rgb]{0,0,0}\makebox(0,0)[lb]{\smash{$\partial X$}}}%
    \put(0.9167262,0.00230949){\color[rgb]{0,0,0}\makebox(0,0)[lb]{\smash{$\partial H$}}}%
    \put(0.46031806,0.34233554){\color[rgb]{0,0,0}\makebox(0,0)[lb]{\smash{$\varphi$}}}%
    \put(0.46031806,0.00230949){\color[rgb]{0,0,0}\makebox(0,0)[lb]{\smash{$\varphi$}}}%
  \end{picture}%
\endgroup%
	
  \caption{Any triangulation inducing a linear degeneracy would require either multiple edges (top) or a dividing edge (bottom).}
  \label{F:appendix}
\end{figure}


\textbf{Step 5: Conclusion.} Since $X_2$ is empty, no triangle in the
image of $\Phi$ is degenerate. Furthermore, all the $X_1$
are reduced to a single point and thus $K(v)$ is zero for all the
vertices $v$. The only remaining possible pathology is if all the
triangles adjacent to a non-boundary vertex $v$ are mapped to a
half-plane around $\Phi(v)$. By the convexity constraint, this can
only happen if the edges adjacent to $v$ are aligned, but this would
yield degenerate triangles. Therefore $\Phi$ is a local homeomorphism of degree $1$, hence it is a global homeomorphism and $\varphi$ is an embedding.
\fi 

\end{document}